\documentclass[hidelinks,12pt]{article}

\usepackage[utf8]{inputenc}
\usepackage[T1]{fontenc}
\usepackage[english]{babel}
\usepackage{csquotes}
\usepackage{authblk} 
\usepackage{ulem}

\usepackage{graphicx}
\usepackage{subcaption}
\usepackage{multirow}
\usepackage{array}
\usepackage{float}
\usepackage{geometry}
\geometry{a4paper, left=2.5cm, right=2.5cm, top=2.5cm, bottom=3cm} 

\usepackage{diagbox}
\usepackage{enumitem}
\usepackage{fontsize}
\usepackage{hyperref}
\usepackage{rotating}
\usepackage{xcolor} 
\usepackage{ulem} 
\usepackage{lmodern}

\usepackage{natbib}

\usepackage{amsmath, amsthm, amssymb, amsfonts}
\usepackage{mathrsfs, yhmath}


\newcommand{\R}{\mathbb{R}}

\newtheorem{thm}{Theorem}[section] 
\newtheorem{cor}[thm]{Corollary}
\newtheorem{prop}[thm]{Proposition}



\newtheorem{hypothesis}{Hypothesis}

\theoremstyle{definition}
\newtheorem{defi}[thm]{Definition}
\newtheorem{ex}[thm]{Example}

\theoremstyle{remark} 
\newtheorem{remark}[thm]{Remark}


\DeclareMathOperator{\cun}{\mathcal{C}^1}

\newcommand{\petito}[1]{o\mathopen{}\left(#1\right)}

\title{\textsc{Dynamics of Two Species with Density-Dependent Interactions in a Mutualistic Context}}

\author[1]{Chlo\"e Mian*}
\author[2]{Sylvain Billiard}
\author[3]{Violaine Llaurens}
\author[1,4]{Charline Smadi}

\affil[1]{Univ. Grenoble Alpes, CNRS, Institut Fourier (UMR 5582), 38000 Grenoble, France}
\affil[2]{Univ. Lille, CNRS, Evo-Eco-Paleo (UMR 8198), 59800 Lille, France}
\affil[3]{Collège de France, CNRS, INSERM, Centre Interdisciplinaire de Recherche en Biologie (UMR 7241), 75005 Paris, France}
\affil[4]{Univ. Grenoble Alpes, INRAE, LESSEM, 38000 Grenoble, France}

\date{}

\begin{document}

\maketitle

\vspace{-3em}
\begin{center}
*Corresponding author. E-mail: chloe.mian@univ-grenoble-alpes.fr
\end{center}

\renewcommand{\abstractname}{\textsc{Abstract}}

\begin{abstract}
Mutualistic interactions, where individuals from different species can benefit from each other, are widespread across ecosystems. 
This study develops a general deterministic model of mutualism involving two populations, assuming that mutualism may involve both costs and benefits for the interacting individuals, leading to density-dependent effects on the dynamics of the two species. 
This framework aims at generalizing pre-existing models, by allowing the ecological interactions to transition from mutualistic to parasitic when the respective densities of interacting species change.
Through ordinary differential equations and phase portrait analysis, we derive general principles governing these systems, identifying sufficient conditions for the emergence of certain dynamic behaviors.
In particular, we show that limit cycles can arise when interactions include parasitic phases but are absent in strictly mutualistic regimes.
This framework provides a general approach for characterizing the population dynamics of interacting species and highlights the effect of the transitions from mutualism to parasitism due to density dependence.
\end{abstract}

\textbf{Keywords.} Parasitism, Limit cycles, Phase portrait analysis, Population interactions \\

\textbf{Mathematics Subject Classification.} 37N25, 92-10, 34C23, 92D25  

\section{\textsc{Introduction}}
\label{sec:Intro}


In natural communities, ecological interactions between individuals from different species are widespread and profoundly shape the dynamics of populations. 
These interactions can be either antagonistic relationships, such as competition or parasitism, or cooperative ones, such as mutualism \citep{bronstein2015a, janzen1985}, therefore leading to different demographic outcomes in the populations involved.
Mutualistic interactions, where individuals from interacting species may benefit from the relationship, include processes such as seed dispersal \citep{janzen1985}, pollination \citep{klein2007}, resource exchange \citep{vanderheijden2008}, and protection \citep{bronstein2015a}.
For example, mycorrhizal fungi provide plants with essential nutrients, while plants provide carbohydrates to the interacting fungi \citep{bonfante2010mechanisms}. 
Similarly, in pollination, plants produce nectar feeding some insects, which in turn transport pollen and facilitate plant reproduction \citep{sargent2008plant}.

Mutualistic relationships can be either facultative or obligatory, hence triggering different consequences on the co-existence and dynamics of interacting populations. 
If mutualistic interactions are obligatory, the persistence of each species entirely depends on the mutualistic relationship and hence on the presence of the interacting species (\textit{e.g.} the fig-wasp specialized relationship \citep{bronstein2001}). Otherwise, in facultative interactions, the dynamics of each species can benefit from each other, but each species can be maintained in the absence of the other (\textit{e.g.} generalist pollination, where plants and pollinators can often find alternative partners \citep{sargent2008plant}). 
However, the benefits brought to the individuals from the interacting species are not necessarily symmetrical nor constant.

While the mutualistic interactions are classically assumed to bring only benefits to the individuals involved, they can also bring costs, especially in certain conditions of respective densities. Density-dependent benefits have indeed been documented in several communities. 
Respective densities of interacting species can shift the cost-benefit dynamics of the relationship, consequently shifting interactions along a continuum from mutualism to parasitism. 
For example, in ant-plant mutualism, plants provide ants with food in exchange for protection. 
However, at low ant densities, plants may not receive sufficient protection, while at high densities, the cost of supporting the ants may outweigh the protective benefits \citep{bronstein1994, holland2002, janzen1985}. 
Similarly, in pollination systems, the investment of plants in nectar can become inefficient if pollinator densities are too low to ensure effective pollination, or too high, leading to increased competition among pollinators and diminishing returns for the plants \citep{kawai2007, morris2010, holland2002}.
There is thus a wide diversity of mutualistic interactions in the wild, and the dynamics of the different species involved can differ depending on the ecological context. 
Their consequences on population dynamics have been previously described mostly by  mathematical models focusing on specific ecological situations, therefore limiting general predictions on the effects of mutualistic interactions.

Many classical models addressing mutualism assumed strictly positive interactions, leading to unrealistic outcomes like unbounded population growth or infinite densities \citep{brauer2012, hale2021ecological, may1972}. 
Indeed, because of the complexity and context-dependence of mutualistic interactions, modeling them as a direct extension of Lotka–Volterra systems overlooks key mechanisms, such as density-dependent saturation, that prevent unbounded growth in actual populations.
Moreover, the factors shaping the persistence of mutualistic interactions and the transitions between parasitic and mutualistic interactions driven by populations densities variations (\textit{i.e.} on ecological time scales) are poorly documented in mutualistic interactions, while they are well-known in predator–prey systems modeled using Lokta-Volterra equations \citep{holland2002, revilla2015}.
\cite{brauer2012} and \cite{holland2002}  considered mutualistic models with saturation effects preventing either unbounded growth or systematic extinction. 
More recently \cite{ruiz2025interaction} analyzed mutualistic systems with a density-dependent framework, emphasizing the role of benefit–cost ratios in shaping long-term interaction outcomes.
Asymmetries in obligate \textit{vs.} facultative mutualistic interactions among species have also been shown to significantly affect system stability and resilience in a series of models \citep{bronstein1994, holland2010, neuhauser2004mutualism}. 
Most of these previous models studying the dynamics of mutualistic interactions relied on ordinary differential equations to describe the conditions for coexistence and stability (see \cite{hale2021ecological} for a review). 
Other approaches such as network analysis have also been used to explore interspecific relationships within ecological systems \citep{bascompte2003nested} and partial differential equations to model the spatial dynamics of mutualistic interactions \citep{aliyu2021interplay}.


Despite this diversity of modeling approaches, a general theoretical framework investigating  the coexistence and dynamics of mutualistic species while accounting for any type of density-dependent effects is still lacking.
At the same time, as emphasized by \cite{hale2021ecological}, it is striking that all mutualistic interaction models studied so far show similar outcomes and behaviors, suggesting that some general properties can be found, independently of the peculiarities of each model. 
For example, there must be some general conditions for the stable coexistence of the two species when the interaction is density-dependent.
Previous models of mutualistic interactions often focus on equilibrium points, typically identifying stable nodes or saddle points as the primary outcomes of these systems with particular modeling choices of the functional forms relating the rate of interactions, the birth and death rates, and the populations densities.
While these models provide valuable insights, they rarely consider the possibility of more complex dynamics, such as limit cycles or oscillatory behaviors, therefore preventing general predictions.

By building and analyzing a mathematical model of two-species mutualism with general equations allowing to cover any kind of density-dependent processes, we aim to overcome these limitations and to clarify the range of dynamical behaviors that can emerge in mutualistic systems, including equilibrium structures.
To this end, we adopt a framework that is general in two complementary senses.
First, the model proposed here is mathematically general: instead of prescribing specific functional responses or mechanistic assumptions, we rely on qualitative and structural constraints on the growth functions and the signs of their partial derivatives. 
This enables the model to encompass a wide class of two-species interaction systems, including but not limited to most previously published mutualistic models (see \cite{hale2021ecological} and Tables in Section \ref{subsec:General_Framework}).
Second, the model described here is biologically general: because interactions are represented through density-dependent signs of the cross-partial derivatives, the framework is able to incorporate a broad spectrum of ecological processes such as mutualistic benefits, saturation, costs, and shifts to parasitic effects at high densities.
Thus, it provides a unified tool for  ecological relationships that do not necessarily remain strictly mutualistic across all possible population densities.

This paper is structured as follows: Section \ref{sec:Model} first provides the general definitions of density-dependent effects acting on the dynamics of species involved in ecological interactions, which will be used throughout the paper.
We introduce a system of differential equations for two interacting populations, incorporating density-dependent interactions, without focusing on specific types of interactions within a particular ecological context and on limited range of parameters.
In Section \ref{sec:General_Behavior}, we formulate general and robust assumptions derived from the analysis of deterministic mutualistic models (Theorem \ref{thm:equilibrium}).
We then present examples of mutualistic interaction models that, despite aiming to represent different aspects of mutualism or emphasizing certain features over others, fall within the framework of Theorem \ref{thm:equilibrium}.
In Section \ref{sec:Mutualistic}, we use the extended definition of mutualistic interactions and identify sufficient conditions for the emergence of oscillatory dynamics.
These conditions, often overlooked in previous studies, provide new insights into the potential for stable coexistence or extinction.
After discussing our results in Section \ref{sec:Discussion}, we present the corresponding proofs in Section \ref{sec:Proofs}.
\section{Model And Key Definitions}
\label{sec:Model}

\subsection{General Framework}
\label{subsec:General_Framework}

We model the dynamics of two interacting populations using the system:
\begin{equation}
\left\{
    \begin{array}{ll}
        \dot{x} &= x f\left(x,y\right) \\
        \dot{y} &= y g\left(x,y\right)
    \end{array}
\right.
\label{eq:init}
\end{equation}
where $x$ and $y$ represent the densities of each species, $f$ and $g$ describe the density-dependent growth rates of each species.
Because we model population sizes, the analysis is confined to the non-negative quadrant of $\R^2$.
This form is commonly used to describe two-species dynamics without accounting for immigration (see \cite{thompson2006dynamics}), as in the models studied in \cite{may1972},  \cite{brauer2012}, and reviewed in \cite{hale2021ecological}.
The sections \ref{subsec:linear} to \ref{subsec:non-mono} present examples of mutualism models studied in previous works, which are of the form \eqref{eq:init}. For the sake of readability, we have divided them into three groups according to their underlying ecological mechanisms.
They are presented with their corresponding $f$ and $g$ functions from Eq. \eqref{eq:init}.

\subsubsection{Linear–Benefit Mutualism Models}
\label{subsec:linear}

Table \ref{tab:linear} compiles mutualism models where benefit increases linearly with partner density and there is no saturation effect.

\begin{table}[h!]
\centering
\begin{tabular}{
    >{\centering\arraybackslash}m{3cm}
    >{\centering\arraybackslash}m{5cm}
    >{\centering\arraybackslash}m{4.1cm}
    >{\centering\arraybackslash}m{2.2cm}
}
\hline
\textbf{Model} & \textbf{Per-capita growth functions} & \textbf{Interpretation} & \textbf{Reference} \\
\hline
\\
\textbf{Linear benefit with logistic limitation} &
\(
\begin{aligned}
f(x,y) &= r_1\left(\dfrac{K_1+\alpha_{12}y - x}{K_1}\right)\\
g(x,y) &= r_2\left(\dfrac{K_2+\alpha_{21}x - y}{K_2}\right)
\end{aligned}
\) &
Linear increase of partner benefit; modification of intrinsic growth. &
\cite{gause1935behavior} \\
\hline
\textbf{Symbiont--host linear model} &
\(
\begin{aligned}
f(x,y) &= r_1\frac{K_1+\alpha_{12}y - x}{K_1+\alpha_{12}y}\\
g(x,y) &= r_2\frac{K_2+\alpha_{21}x - y}{K_2}
\end{aligned}
\) &
Asymmetric symbiont--host structure; linear modification of carrying capacity; obligate for \(K_1=0\). &
\cite{whittaker1970communities} \\
\hline
\textbf{Lotka--Volterra mutualism} &
\(
\begin{aligned}
f(x,y) &= r_1 + \alpha_{12}y - \alpha_{11}x\\
g(x,y) &= r_2 + \alpha_{21}x - \alpha_{22}y
\end{aligned}
\) &
Classical linear mutualistic enhancement of per-capita growth; obligate when \(r_1/\alpha_{11}\le 0\). &
\cite{vandermeer1978varieties} \\
\hline
\textbf{Multiplicative linear benefit} &
\begin{equation*}
\begin{aligned}
f(x,y) &= r_1\left(1-\dfrac{x}{K_1}\right)\left(1+\dfrac{\alpha_{12}y}{K_1}\right) \\
g(x,y) &= r_2\left(1-\dfrac{y}{K_2}\right)\left(1+\dfrac{\alpha_{21}x}{K_2}\right)
\end{aligned}
\end{equation*} &
Partner increases effective carrying capacity multiplicatively; facultative. &
\cite{addicott1981stability} \\
\hline
\end{tabular}
\caption{Linear–benefit mutualism models expressed through per-capita growth functions \(f\) et \(g\).}
\label{tab:linear}
\end{table}

\newpage
\subsubsection{Saturating Mutualism Models (Interspecific Saturation)}
\label{subsec:satur}
Table \ref{tab:saturing} compiles mutualism models where the benefits gained by a specie is a saturating function of the density of the other species.
\begin{table}[h!]
\centering
\begin{tabular}{
    >{\centering\arraybackslash}m{3cm}
    >{\centering\arraybackslash}m{5cm}
    >{\centering\arraybackslash}m{4.1cm}
    >{\centering\arraybackslash}m{2.2cm}
}
\hline
\textbf{Model} & \textbf{Per-capita growth functions} & \textbf{Interpretation} & \textbf{Reference} \\
\hline
\textbf{Birth limitation reduced by partner} &
\(
\begin{aligned}
f(x,y) &= r_1 - \frac{b_1 x}{1 + m_1 y} - d_1 x \\
g(x,y) &= r_2 - \frac{b_2 y}{1 + m_2 x} - d_2 y
\end{aligned}
\) &
Partner reduces density-dependent birth limitation; saturating effect with partner density; facultative. &
\cite{wolin1984models} \\
\hline
\textbf{Saturating forager mutualism} &
\(
\begin{aligned}
f(x,y) &= r_1 - c_1 x + \frac{b_{12} a_{12} y}{1 + a_{12} h_{12} y}\\
g(x,y) &= r_2 - c_2 y + \frac{b_{21} a_{21} x}{1 + a_{21} h_{21} x}
\end{aligned}
\) &
Handling time limits mutualistic gains; stabilizes coexistence; fits pollination/foraging systems. &
\cite{wright1989simple} \\
\hline
\textbf{Exponential saturation of benefits} &
\begin{equation*}
\begin{aligned}
f(x,y) &= r_{10} + (r_{11}-r_{10}) \\
&\times \left(1-e^{-k_1 y}\right) - a_1 x \\
g(x,y) &= r_{20} + (r_{21}-r_{20})\\
&\times\left(1-e^{-k_2 x}\right) - a_2 y
\end{aligned}
\label{eq:Graves}
\end{equation*}
 &
Benefits saturate exponentially with partner density. &
\cite{graves2006bifurcation} \\
\hline
\\
\textbf{Plant--pollinator with shared saturation} &
\(
\begin{aligned}
f(x,y) &= \frac{\eta \alpha y}{1+\alpha x+\alpha \beta y} - b - c x \\
g(x,y) &= \frac{\mu \alpha x}{1+\alpha x+\alpha \beta y} - d
\end{aligned}
\) &
Shared saturation of benefits; obligate mutualism. &
\cite{fishman2010plant} \\
\hline
\end{tabular}
\caption{Mutualism models with interspecific saturation of benefits expressed through per-capita growth functions $f$.}
\label{tab:saturing}
\end{table}

\newpage
\subsubsection{Non-Monotonic Mutualism Models and Consumer-Resource Approach}
\label{subsec:non-mono}
Table \ref{tab:non-mono} compiles mutualism models where the benefits gained by a specie is a non-monotonic function of the density of the other species.

\begin{table}[h!]
\centering
\begin{tabular}{
    >{\centering\arraybackslash}m{3cm}
    >{\centering\arraybackslash}m{5cm}
    >{\centering\arraybackslash}m{4.1cm}
    >{\centering\arraybackslash}m{2.2cm}
}
\hline
\textbf{Model} & \textbf{Per-capita growth functions} & \textbf{Interpretation} & \textbf{Reference} \\
\hline
\textbf{Quadratic interaction} &
\(
\begin{aligned}
f(x,y) &= r_1\left(c_1 - x - a_1 (y - b_1)^2\right)\\
g(x,y) &= r_2\left(c_2 - y - a_2 (x - b_2)^2\right)
\end{aligned}
\label{eq:Zhang}
\) &
Benefit maximized at intermediate partner density; allows transitions between mutualism and competition. &
\cite{zhang2003mutualism} \\
\hline
\textbf{Plant--mycorrhizae with cost} &
\(
\begin{aligned}
f(x,y) &= r_1\frac{K_1+\alpha_{12}y-x}{K_1+\alpha_{12}y} - a y \\
g(x,y) &= r_2\frac{K_2+\alpha_{21}x-y}{K_2}
\end{aligned}
\label{eq:Neuhauser}
\) &
Interspecific asymmetric cost; facultative mutualism. &
\cite{neuhauser2004mutualism} \\
\hline
\textbf{Protection mutualism with saturating mortality reduction} &
\begin{equation*}
\begin{aligned}
f(x,y) &= \rho_1 b_1\left(1-\frac{x}{S_1}\right) \\
        &-\left(d_{1\min}+\frac{d_{1\max}-d_{1\min}}{1+c_1 y}\right) \\
g(x,y) &= \rho_2b_2\left(1-\frac{y}{S_2+x}\right)\\
        &-\left(d_{2\min}+\frac{d_{2\max}-d_{2\min}}{1+c_2 x}\right)
\end{aligned}
\end{equation*} &
Protection reduces mortality at low partner density; net effects may shift with density. &
\cite{thompson2006dynamics} \\
\hline
\\
\textbf{Consumer--resource mutualism} &
\(
\begin{aligned}
f(x,y) &= r_1 + c_1 \frac{a_{12} y}{h_2 + y} \\
&- q_1 \frac{\beta_{12} y}{e_1 + x} - s_1 x \\
g(x,y) &= r_2 + c_2 \frac{a_{21} x}{h_1+x} \\
        &- q_2 \frac{\beta_{21} x}{e_2+y} - s_2 y
\end{aligned}
\label{eq:Holland}
\) &
Bidirectional consumer–resource structure; obligate when $r_1=0$. &
\cite{holland2010} \\
\hline
\\
\textbf{Ant--fungal garden mutualism} &
\(
\begin{aligned}
f(x,y) &= r_f\frac{a y^2}{b+a y^2} - r_c y - d_1 x \\
g(x,y) &= r_a x - d_2 y
\end{aligned}
\) &
Consumer--resource mutualism with nonlinear resource production. &
\cite{kang2011mathematical} \\
\hline
\textbf{Plant--mycorrhizae with variable costs} &
\begin{equation*}
\begin{aligned}
f(x,y) &= r_p + q_{hp}\frac{\alpha y}{d+x}\\
        &- q_{cp}\beta y - \mu_p x \\
g(x,y) &= q_{cm}\beta x
        - q_{hm}\frac{\alpha x}{d+x}- \mu_m y
\end{aligned}
\end{equation*} &
Explicit separation of benefit and cost terms; obligate mutualism. &
\cite{martignoni2020parasitism} \\
\hline
\textbf{Plant--pollinator consumer--resource model} &
\begin{equation*}
\begin{aligned}
f(x,y) &= b_P\left(f+\phi\frac{axy}{1+a h x+axy}\right)\\
        &- s_P x - d_P \\
g(x,y) &= b_A + \epsilon\frac{a x}{1+a h x}
        - s_A y - d_A
\end{aligned}
\end{equation*} &
Mechanistic pollination model. &
\cite{hale2021ecological} \\
\hline
\end{tabular}
\caption{Non-monotonic mutualism and consumer-resource models with peak benefit at intermediate partner density expressed through per-capita growth functions $f$.}
\label{tab:non-mono}
\end{table}

\newpage

\subsection{Key Assumptions and Definition}
\label{subsec:Key_Assumptions_Def}
We have so far not explicitly defined what makes a mutualistic model different than a model considering predator–prey or competitive interactions. 
The system of differential equations \eqref{eq:init} does not, by itself, specify the nature of the interactions, as it also includes classical models such as the Lotka–Volterra predator–prey system.
We introduce hereafter definitions of mutualism, parasitism, competition, each imposing specific constraints on the functional forms of $f$ and $g$.

To analyze and interpret the behavior of the general system \eqref{eq:init} in a mutualistic context, we follow an approach similar to \cite{brauer2012}, relying on the partial derivatives of the growth functions. 
These derivatives capture how the growth rate of each species responds to changes in the density of both species.
For this reason, starting from Section \ref{subsec:Intra_Competition}, we will assume that $f$ and $g$ are $\cun$ functions \textit{i.e.} continuous and continuously differentiable.
We focus on specific regions of the phase plane defined by the population densities $x$ and $y$. 
In each region, the signs of the partial derivatives of the interaction terms determine the local effect of one species on the growth rate of the other. 
As a result, an interaction may be positive in one part of the phase portrait and negative in another, depending on the local population densities.
Different regions of the phase portrait may correspond to different interaction regimes, described below:

\begin{itemize}
    \item \textbf{Mutualism and Parasitism}: $\frac{\partial f}{\partial y} > 0$ and $\frac{\partial g}{\partial x} > 0$ represent a region of strict mutualism, where each species positively influences the growth of the other. 
    We refer to the model as strictly mutualistic when the condition is satisfied for all population densities $x$ and $y$.
    Conversely, $\frac{\partial f}{\partial y} < 0$ or $\frac{\partial g}{\partial x} < 0$ indicates a region of parasitism, as the presence of one species negatively impacts the growth of the other.
    \item \textbf{Intraspecific Competition}: $\frac{\partial f}{\partial x} < 0$ and $\frac{\partial g}{\partial y} < 0$ describe negative feedback within each species, representing competition, where individuals of the same population limit each other's growth.
    \item \textbf{Intraspecific Cooperation}: $\frac{\partial f}{\partial x} > 0$ and $\frac{\partial g}{\partial y} > 0$ describe positive feedback within each species, representing cooperative interactions, where individuals of the same population support each other's growth. This mechanism is also called 'Allee effect' when the positive feedback occurs at low density.
    \item \textbf{Dynamic Transitions}: The signs of $\frac{\partial f}{\partial y}$ and $\frac{\partial g}{\partial x}$ are not fixed and may change with species densities, reflecting a continuum of interactions ranging from mutualism to parasitism.
    The phase portrait may contain regions where the signs of $\frac{\partial f}{\partial y}$ and $\frac{\partial g}{\partial x}$ differ, reflecting changes in interaction type.
\end{itemize}

In contrast to many pre-existing models where these partial derivatives always remain within the same strict sign or even remain constant regardless of species densities \citep{hale2021ecological}, we allow the sign of these partial derivatives to change. 
Such assumption allows for transitions from mutualistic (positive value) to parasitic interactions (negative value) between species, depending on their respective densities. 
We thus give an extended definition of mutualism as follows:

\begin{defi}
\label{def:mut}
    A system of differential equations of the form \eqref{eq:init} will be said to be mutualistic if, in the phase portrait of $\R_+ \times \R_+$, there is at least a region where $\frac{\partial g}{\partial x} > 0$ and at least a region where $\frac{\partial f}{\partial y} > 0$. 
    These two regions may be disjoint.
\end{defi}

These definitions are illustrated in Figure \ref{fig:schema_def}, which presents two different situations, both consistent with mutualism as defined in Definition \ref{def:mut}. 
In the first case, a strictly mutualistic region appears, although the condition does not hold for all densities $x$ and $y$. 
In the second case, the system remains mutualistic overall, but no strictly mutualistic region exists; that is, both populations may benefit from the interaction, though not simultaneously.

\begin{figure}[h]
    \centering
    \begin{subfigure}[b]{0.4\textwidth}
        \centering
        \includegraphics[width=\textwidth]{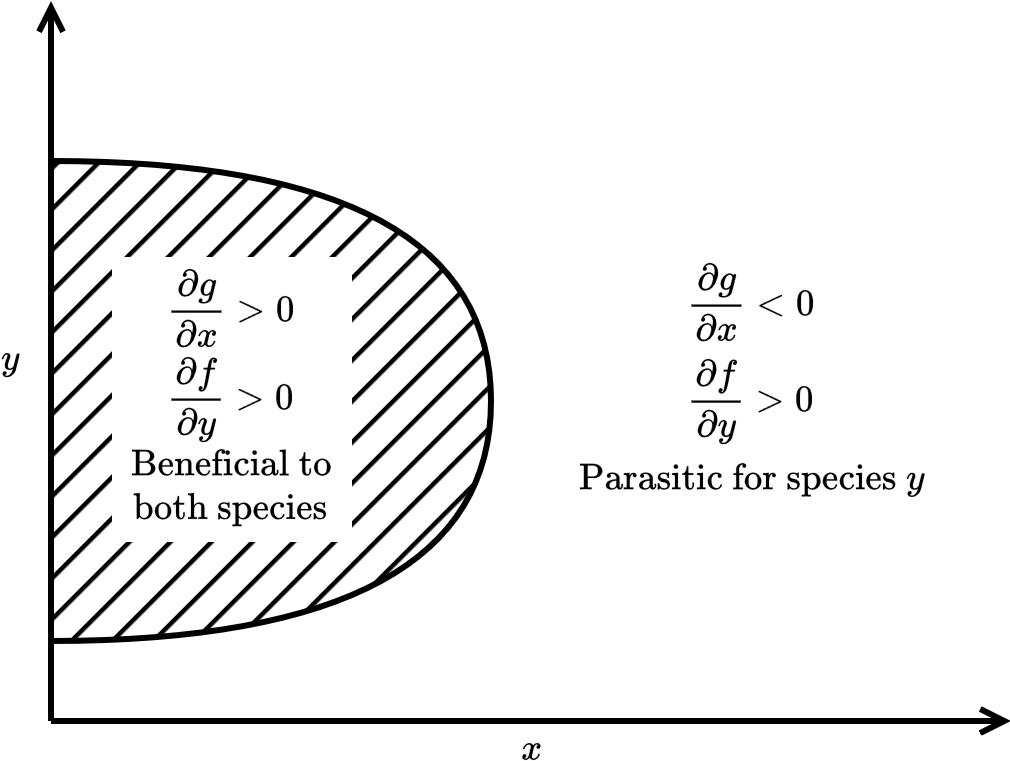}
        \caption{Local region of strict mutualism: Overlapping positive partial derivatives.}
        \label{fig:schema_mut}
    \end{subfigure}
    \hfill
    \begin{subfigure}[b]{0.4\textwidth}
        \centering
        \includegraphics[width=\textwidth]{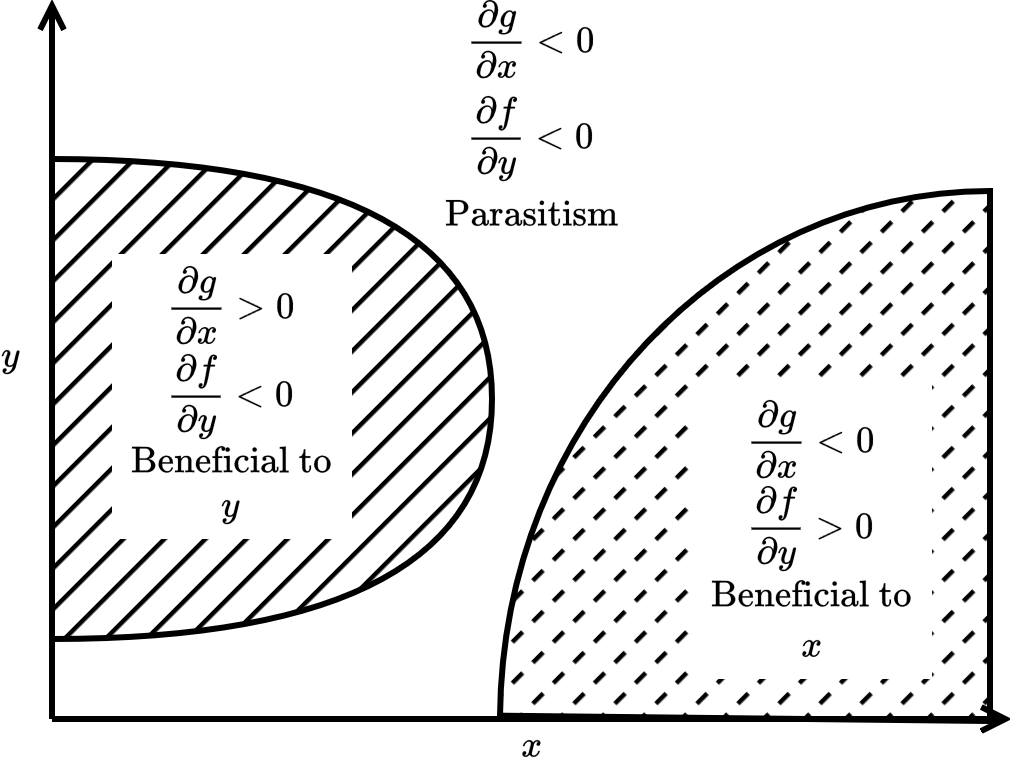}
        \caption{Disjoint regions of mutualistic influence: Non-overlapping mutualistic effects.}
        \label{fig:schema_para}
    \end{subfigure}
    \caption{\textit{Illustration of density-dependent effects of mutualism (Definition \ref{def:mut}) through the signs of cross partial derivatives.} 
    In both situations, the interaction is classified as mutualistic according to Definition \ref{def:mut}, even though the local effects of one species on the other vary.
    Figure \ref{fig:schema_mut}: $\frac{\partial f}{\partial y}>0$ everywhere, meaning that species $y$ always has a positive effect on the growth of species $x$. The hatched region indicates where $\frac{\partial g}{\partial x}>0$, so that species $x$ also benefits species $y$ locally. This region represents a strictly mutualistic zone. Outside this zone, the effect of species $x$ on $y$ is negative, yet the system still qualifies as mutualistic under Definition \ref{def:mut}.
    Figure \ref{fig:schema_para}: The cross partial derivatives $\frac{\partial f}{\partial y}$ and $\frac{\partial g}{\partial x}$ vary and are positive in distinct, non-overlapping regions:
    the hatched (resp. dashed) region corresponds to $\frac{\partial g}{\partial x}>0$ (resp. $\frac{\partial f}{\partial y}>0$). Outside these regions, the interaction is parasitic for both species.}
    \label{fig:schema_def}
\end{figure}
\newpage
\section{General dynamics of two species with any types of interactions}
\label{sec:General_Behavior}

In this work, we study the behavior of system \eqref{eq:init}, including stability, oscillations, and other dynamics, with a particular focus on how the specific forms of $f$ and $g$ influence species coexistence. 
In Sections \ref{subsec:General_Assumptions}, \ref{sec:Index} and \ref{sec:Axes} we  consider a general framework in which $f$ and $g$ 
are assumed to be continuous only, and we apply index theory (see \cite{perko2013differential} and Appendix \ref{appendix:Index} for details) to prove Theorem \ref{thm:equilibrium}.
Then, in Section \ref{subsec:Intra_Competition}, using our general Theorem \ref{thm:equilibrium}, we more precisely investigate how the nature of the interactions can affect the outcomes of the system under the additional assumption that $f$ and $g$ are $\cun$. 
We mainly focus on simple equilibrium points and use methods based on the Jacobian matrix to analyze their stability and nature.
For example, regardless of the specific nature of the interaction represented by system \eqref{eq:init}, certain information can already be deduced about the equilibrium point $\left(0,0\right)$ by applying these methods (see Appendix \ref{appendix:$(0,0)$} and \cite{brauer2012}).

\subsection{Assumptions}
\label{subsec:General_Assumptions}

We introduce several assumptions about the functions $f$ and $g$ that are commonly derived from biological models of population dynamics. 
Most of these assumptions, inspired by models reviewed in \cite{hale2021ecological}, are robust and remain valid even when $f$ and $g$ are subject to minor local perturbations. 
We recall that the zeros of $f$ and $g$ define the isoclines of the system, and their intersections determine the equilibrium points.

\begin{hypothesis}\label{hyp:one}
The functions $f$ and $g$ each define a single additional isocline, beyond those at $x = 0$ and $y = 0$: a vertical isocline for $f$ and a horizontal isocline for $g$.
The zeros $\left(x,y\right)$ of $f$ (resp. of $g$) form a single continuous curve.
\end{hypothesis}

We will denote these curves by $\Gamma_f$ and $\Gamma_g$:

\begin{equation*}
\Gamma_f := \{\left(x,y\right) \in \mathbb{R}_+ \times \mathbb{R}_+ , f\left(x,y\right)=0\}
\end{equation*}
and 

\begin{equation*}
\Gamma_g := \{\left(x,y\right) \in \mathbb{R}_+ \times \mathbb{R}_+, g\left(x,y\right)=0\},
\end{equation*}

with $\Gamma_f \neq \emptyset$ and $\Gamma_g \neq \emptyset$.

\begin{hypothesis}\label{hyp:two}
$\Gamma_f$ and $\Gamma_g$ delimit regions of strict constant sign of $f$ and $g$.
Moreover we impose that the signs of $f$ (respectively of $g$) change on either side of the curve $\Gamma_f$ (respectively of the curve $\Gamma_g$).
\end{hypothesis}

\begin{hypothesis}\label{hyp:three}
No more than two isoclines intersect at any given point.
\end{hypothesis}

In particular, \ref{hyp:three} implies that $\left(0,0\right) \notin \Gamma_f \cup \Gamma_g$.

\begin{hypothesis}\label{hyp:four}
We exclude the boundary cases where the equilibrium points are formed by two isoclines that only touch at that point but do not cross, and the case where the isoclines are coincident.
\end{hypothesis}

These assumptions directly influence the properties of the vector field. 

\subsection{Equilibrium Index Alternation}
\label{sec:Index}

One characteristic of equilibrium points is their index, which describes the local behavior of the flow in their neighborhood.
The index helps determine whether an equilibrium point acts as an attractive, a repulsive, or a saddle point, and reveals how trajectories interact with the point (see \cite{perko2013differential} and Appendix \ref{appendix:Index} for more details).
In particular, for systems satisfying \ref{hyp:one} to \ref{hyp:four} on $f$ and $g$, the indices of equilibrium points in the positive quadrant exhibit a regular alternation along the isoclines. 
This result is formalized in the following theorem:

\begin{thm}\label{thm:equilibrium}
    Let a dynamical system in $\R_+ \times \R_+$ be described by \eqref{eq:init}, with functions $f$ and $g$ satisfying hypotheses \ref{hyp:one} to \ref{hyp:four}. 
    Then, in the positive quadrant, the equilibrium points of the system alternate along the isoclines between having an index of $+1$ and an index of $-1$.
\end{thm}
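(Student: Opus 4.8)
The plan is to compute the Poincaré index of each interior equilibrium as a winding number of the planar vector field and then to track how this winding number changes as one moves along an isocline. First I would reduce the problem from the full field $(xf,yg)$ to the simpler map $(f,g)$: on a small loop around an interior equilibrium $p=(x_p,y_p)$ both coordinates stay strictly positive, so the homotopy $t\mapsto\big(((1-t)x_p+tx)f,((1-t)y_p+ty)g\big)$ is nowhere zero on a small punctured loop around $p$ and connects the field $(xf,yg)$ to $(x_pf,y_pg)$; since the latter is a positive diagonal rescaling of $(f,g)$, homotopy invariance of the winding number (Appendix \ref{appendix:Index}) gives that the index of $p$ equals the winding number of $(f,g)$ about $p$. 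By \ref{hyp:one}--\ref{hyp:four} the interior equilibria are exactly the isolated, non-tangential crossings of $\Gamma_f$ and $\Gamma_g$; across each curve the corresponding function changes sign (\ref{hyp:two}), so near such a crossing the four sign patterns of $(f,g)$ each occupy one sector and the winding number is necessarily $\pm1$. This already shows every interior equilibrium has index $+1$ or $-1$; it remains to prove the alternation.

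The key step is to read off the sign of the index from purely combinatorial crossing data. Since $f$ changes sign across the connected curve $\Gamma_f$ (\ref{hyp:one}, \ref{hyp:two}), the side on which $f>0$ is globally well defined, and I would use it to orient $\Gamma_f$ once and for all (say, with the $\{f>0\}$ region on the right of the direction of travel). With this orientation fixed, I claim that a crossing point $p_i\in\Gamma_f\cap\Gamma_g$ has index $+1$ precisely when, as one passes $p_i$ along $\Gamma_f$, the restriction $g|_{\Gamma_f}$ jumps from negative to positive, and $-1$ when it jumps from positive to negative. In the smooth case this is the identity $\tfrac{d}{ds}\,g|_{\Gamma_f}=\det\!\big(\partial(f,g)/\partial(x,y)\big)$, obtained by taking the tangent of $\Gamma_f$ proportional to $(-\partial_y f,\partial_x f)$, whose sign is exactly the index; in the continuous setting I would instead obtain the same statement directly from the four-sector winding computation above, since fixing the $\{f>0\}$ side and prescribing on which side $\{g>0\}$ lies fixes the cyclic order in which the four sign sectors are met, hence the orientation of the winding.

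The alternation then follows immediately. Between two consecutive crossings the curve $\Gamma_f$ meets $\Gamma_g$ nowhere, so $g|_{\Gamma_f}$ keeps a constant sign on each arc; by \ref{hyp:two} this sign flips every time $\Gamma_f$ crosses $\Gamma_g$. Consequently the direction of the jump of $g|_{\Gamma_f}$ (negative-to-positive versus positive-to-negative) alternates from one crossing to the next, and by the previous paragraph the index alternates between $+1$ and $-1$ along $\Gamma_f$. Running the symmetric argument with the roles of $f$ and $g$ exchanged---orienting $\Gamma_g$ by its $\{g>0\}$ side and tracking the sign jumps of $f|_{\Gamma_g}$---gives the same alternation along $\Gamma_g$, which proves the theorem.

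The main obstacle is carrying out the index computation in the continuous, not necessarily differentiable, category, where the Jacobian identity is unavailable. The delicate points are: showing that each interior equilibrium is genuinely isolated and that $\Gamma_f,\Gamma_g$ cross there transversally in the topological sense guaranteed by \ref{hyp:three}--\ref{hyp:four} (no tangencies, no coincident arcs), so that a small loop meets each isocline exactly twice; verifying that the four sign regions of $(f,g)$ really do surround $p_i$ in a single well-defined cyclic order, so that the winding number is exactly $\pm1$; and checking that the global orientation of $\Gamma_f$ by its $\{f>0\}$ side is consistent, which rests on the connectedness in \ref{hyp:one} together with the clean sign change in \ref{hyp:two}. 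Once these topological facts are in place, the combinatorial alternation of the jump direction makes the conclusion automatic.
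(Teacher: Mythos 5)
Your proposal is correct and follows essentially the same route as the paper's proof: both arguments reduce the index computation to the sign pattern of $(f,g)$ in the four sectors surrounding each crossing of $\Gamma_f$ and $\Gamma_g$, identify the two admissible cyclic orders of these sign patterns with indices $+1$ and $-1$, and obtain the alternation from the fact that, by \ref{hyp:two} and \ref{hyp:four}, the sign of $g$ along $\Gamma_f$ (and of $f$ along $\Gamma_g$) flips at every crossing. Your homotopy reduction from $(xf,yg)$ to $(f,g)$ and the orientation/jump-direction bookkeeping are rigorous formalizations of steps the paper treats implicitly or pictorially (Figure \ref{fig:PhasePortrait_Index}), but the key ideas coincide.
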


The proof of this result, provided in Section \ref{sec:Proofs}, is illustrated in Figure \ref{fig:PhasePortrait_Index}.

\begin{figure}[h]
    \centering
    \begin{subfigure}[b]{0.45\textwidth}
        \centering
        \includegraphics[width=\textwidth]{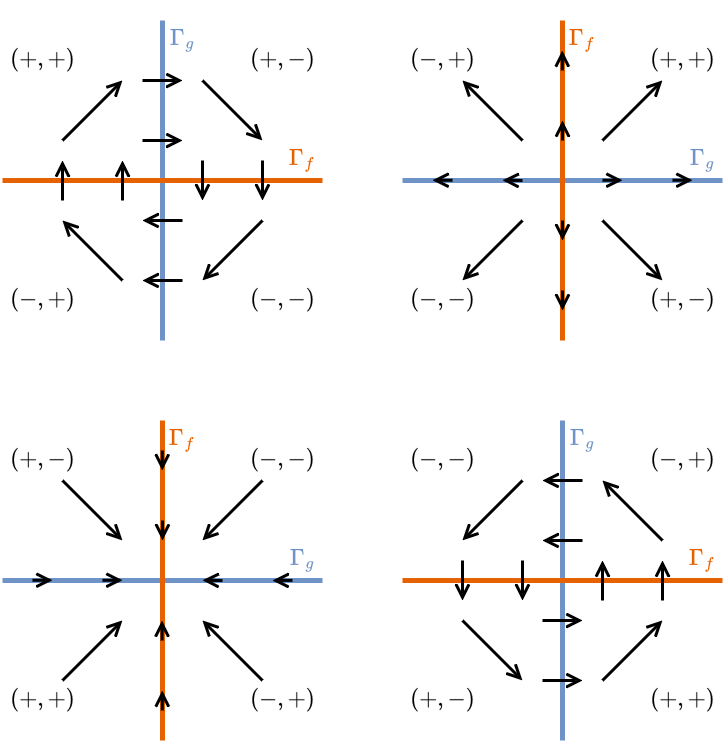}
        \caption{Local structure of the phase portrait around an index $+1$ equilibrium.}
        \label{fig:PhasePortrait_Index+1}
    \end{subfigure}
    \hfill
    \begin{subfigure}[b]{0.45\textwidth}
        \centering
        \includegraphics[width=\textwidth]{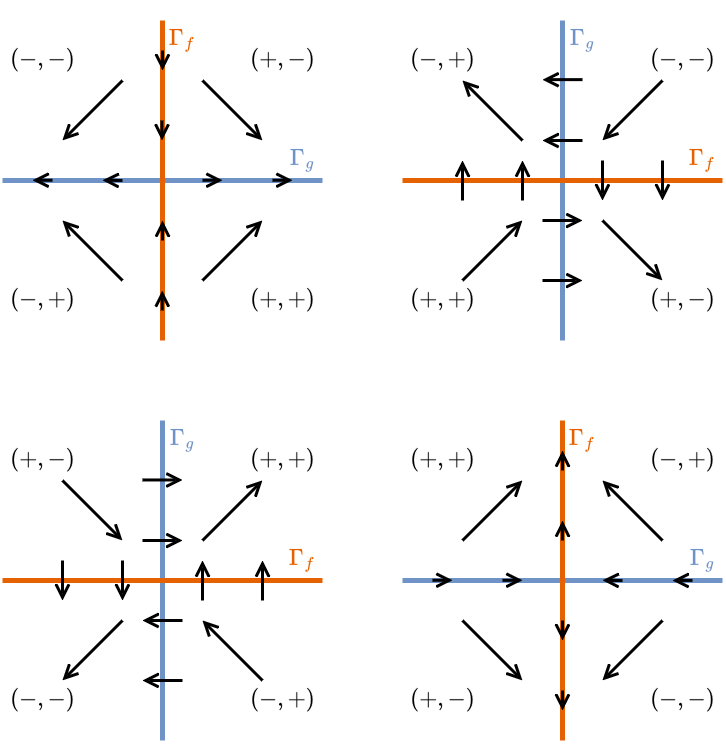}
        \caption{Local structure of the phase portrait around an index $-1$ equilibrium.}
        \label{fig:PhasePortrait_Index-1}
    \end{subfigure}
    \caption{\textit{Two possible local configurations of the phase portrait near an equilibrium point}, 
    corresponding to the two admissible clockwise sign changes of the vector field across adjacent regions.
    Illustration of Proof of Theorem \ref{thm:equilibrium}.
    The curve $\Gamma_g$ is shown in blue, and $\Gamma_f$ in orange.
    Note: A single arrow is used to represent the direction of the vector field within each region for simplicity. For instance, in the $\left(+,+\right)$ region, the direction of the vector field $\left(\nearrow\right)$ may vary from horizontal $\left(\rightarrow\right)$ (excluded) to vertical $\left(\uparrow\right)$ (excluded), as long as the signs remain strictly positive.}
    \label{fig:PhasePortrait_Index}
\end{figure}

\begin{remark}
Here, we recall that we did not assume the functions $f$ and $g$ to be $\cun$; we only required the continuity of the curves $\Gamma_f$ and $\Gamma_g$.  
For instance, we may consider a function $f$ in which the birth rate of species $x$ depends on the density of species $y$ through a threshold-like mechanism—linear at low densities of $y$, and saturating to a constant value at high densities.
\end{remark}

\begin{remark}
We recall that the model covers the wide parasitic–mutualistic continuum.
The interactions may therefore represent competition, predation, or mutualism, as long as the functions $f$ and $g$ satisfy the conditions stated in assumptions \ref{hyp:one}–\ref{hyp:four}.
\end{remark}

Additional assumptions on $f$ and $g$ are necessary to completely describe the configuration and outcomes of the systems.
In fact, knowing that certain points have an index $+1$ is not enough to determine whether an equilibrium is attractive, repulsive, or even a center.
Moreover, identifying the nature of an equilibrium point with index $+1$ is not sufficient to determine the nature of all other equilibrium points with the same index. 
Without further assumptions on $f$ and $g$,  two equilibrium points with index $+1$ may exhibit different behaviors, with one being attractive and the other repulsive.

To illustrate these findings, we present an example of a system that can model a mutualistic interaction, and shows an alternation of equilibrium points with different characteristics.

\begin{ex}
\label{ex:ex+1}
Let us consider the following system:  

\begin{equation}
\left\{
    \begin{array}{ll}
        \dot{x} &= x f\left(x,y\right) = x\left(e_1y - d_1\left(\left(x - a_1\right)^{b_1} + c_1\right)\right), \\
        \dot{y} &= y g\left(x,y\right) = y\left(e_2x - d_2\left(\left(y - a_2\right)^{b_2} + c_2\right)\right),
    \end{array}
\right.
\label{eq:ex+1}
\end{equation}
with $(a_i, c_i, d_i, e_i, i \in \{1,2\})$ being positive constants, and $(b_i, i \in \{1,2\})$ being positive integers.

The functions $f$ and $g$ satisfy assumptions \ref{hyp:one} to \ref{hyp:four}.  
We choose $\frac{\partial f}{\partial y} = e_1 > 0$ and $\frac{\partial g}{\partial x} = e_2 > 0$ to represent strict mutualism with a constant positive effect.  
The other parameters represent internal limitations $d_i$, constraints related to critical thresholds $a_i$, or fixed costs $c_i$.  
We choose the $b_i$ to be even so that the partial derivative $\frac{\partial f}{\partial x} = -d_1 b_1 (x - a_1)^{b_1 - 1}$ can change sign, in this example, depending on the position of $x$ relative to the threshold $a_1$.

Figure \ref{fig:ex+1} displays the phase portrait corresponding to \eqref{eq:ex+1}, for certain parameter values. 
In the positive quadrant, there is an alternation of equilibrium points with index $+1$ and $-1$, with $+1$ points of different natures, including a repulsive point and an attractive point.  
This system could model mutualistic interactions where there is stable coexistence of the two species if the initial population densities are sufficiently high. 
However, if the density of one species is too low, the populations collapse, the state $(0,0)$ being attractive.

\begin{figure}[h]
    \centering
    \includegraphics[width=0.5\linewidth]{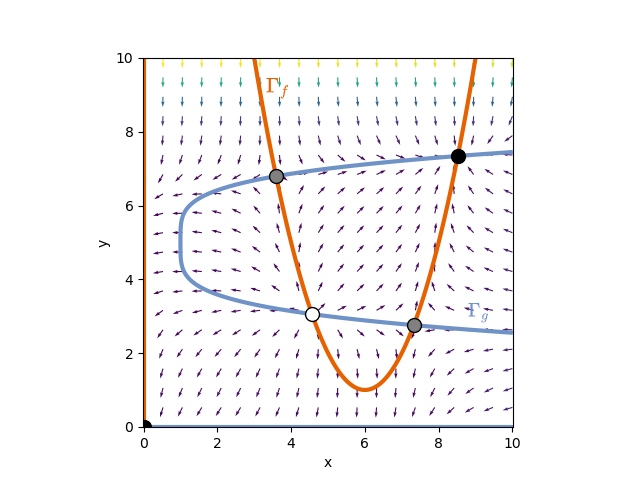}
    \caption{\textit{Example of a mutualism model exhibiting different types of equilibrium points.}
    Phase portrait of system \eqref{eq:ex+1} for the parameter values: $a_1=6$, $b_1=2$, $c_1=1$, $d_1=1$, $e_1=1$ and $a_2=5$, $b_2=4$, $c_2=1$, $d_2=1/4$, $e_2=1$.
    The curve $\Gamma_g$ is shown in blue, and $\Gamma_f$ in orange.
    White points represent repulsive points, black points represent attractive points and gray points represent saddle points.
    The figure illustrates a mutualistic system with multiple types of equilibrium points.}
    \label{fig:ex+1}
\end{figure}
\end{ex}

\subsection{Generalisation to Axes}
\label{sec:Axes}

Theorem \ref{thm:equilibrium} describes the alternation of indices for equilibrium points located strictly inside the positive quadrant, \textit{i.e.} at intersections of $\Gamma_f$ and $\Gamma_g$ with strictly positive coordinates.
However, equilibrium points may also lie on the axes. 
These additional equilibrium points may arise from the intersections of $\Gamma_f$ with the axis $x = 0$, or of $\Gamma_g$ with the axis $y = 0$.
For population models, such equilibria correspond to situations in which one species persists at equilibrium while the other is extinct.

For isoclines $\Gamma_f$ and $\Gamma_g$ satisfying hypotheses \ref{hyp:one} to \ref{hyp:four}, a finite number of geometric configurations may occur along the coordinate axes.
Since each isocline is continuous and consists of a single connected component, it can intersect the axes at most twice.
Taking into account the relative ordering of these intersections along the axes, a total of $71$ distinct configurations are possible.
Among them, $18$ configurations generate at least one equilibrium point located on an axis and are therefore not covered by Theorem \ref{thm:equilibrium}.
Under the same geometric assumptions on the isoclines, this alternation property can be extended to include such boundary equilibria.

\begin{prop}\label{prop:equilibrium_axes}
Under the assumptions of Theorem \ref{thm:equilibrium}, the alternation of equilibrium point indices along the isoclines extends to include equilibrium points located on the coordinate axes, with the exception of the origin $(0,0)$.
\end{prop}

This result shows that the index alternation is a global geometric property of the vector field induced by \eqref{eq:init}, and is not restricted to strictly positive coexistence equilibria. 
Equilibrium points located on the axes naturally fit into the same alternation pattern, while the origin $(0,0)$ plays a singular role and is therefore excluded from this statement.

\subsection{Intraspecific Competition and Cooperation Assumption}
\label{subsec:Intra_Competition}

From this section, we assume that the functions $f$ and $g$ are of class $\cun$.
By applying the implicit function theorem and using the fact that the relevant partial derivatives have strict signs locally, we obtain information about the local monotonicity of the $x$-isocline $\Gamma_f$ and the $y$-isocline $\Gamma_g$.
These results are summarized in Table~\ref{tab:eq_iso}.

In many models presented in \cite{hale2021ecological}, intraspecific competition is a key mechanism in regulating population dynamics within each species, preventing the uncontrolled population growth which may be criticized in some models of mutualistic interactions.  
With this additional assumption, all previous equilibrium points with an index of $+1$ become attractive points.

\begin{cor}\label{cor:attrac}
    Let a dynamical system in $\R_+ \times \R_+$ be described by \eqref{eq:init}, with functions $f$ and $g$ both $\cun$ satisfying hypotheses \ref{hyp:one} to \ref{hyp:four}, and $\frac{\partial f}{\partial x} < 0$ and $\frac{\partial g}{\partial y} < 0$. 
    Then, in the positive quadrant, the system exhibits an alternation of attractive nodes and saddle points along the isoclines.
\end{cor}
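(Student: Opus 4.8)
The plan is to upgrade the purely topological information of Theorem~\ref{thm:equilibrium} (which only pins down the Poincaré index) into a full classification of the equilibria, by exploiting the $\cun$ regularity to linearise the flow. Since $x^\ast,y^\ast>0$ at any interior equilibrium $(x^\ast,y^\ast)$, such a point must lie on $\Gamma_f\cap\Gamma_g$, so $f(x^\ast,y^\ast)=g(x^\ast,y^\ast)=0$. First I would compute the Jacobian of \eqref{eq:init} at $(x^\ast,y^\ast)$; the vanishing of $f$ and $g$ kills the diagonal growth contributions and leaves
\[
J=\begin{pmatrix} x^\ast\frac{\partial f}{\partial x} & x^\ast\frac{\partial f}{\partial y}\\[4pt] y^\ast\frac{\partial g}{\partial x} & y^\ast\frac{\partial g}{\partial y}\end{pmatrix},
\]
all partial derivatives being evaluated at $(x^\ast,y^\ast)$.

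The key observation is that the two scalar invariants of $J$ are controlled separately by the two hypotheses. On the one hand, $\operatorname{tr}J=x^\ast\frac{\partial f}{\partial x}+y^\ast\frac{\partial g}{\partial y}$, which is strictly negative because $x^\ast,y^\ast>0$ and the intraspecific competition assumption gives $\frac{\partial f}{\partial x}<0$ and $\frac{\partial g}{\partial y}<0$. On the other hand, $\det J=x^\ast y^\ast\big(\frac{\partial f}{\partial x}\frac{\partial g}{\partial y}-\frac{\partial f}{\partial y}\frac{\partial g}{\partial x}\big)$, whose sign I would tie back to the index. The point is that \ref{hyp:four} forces $\Gamma_f$ and $\Gamma_g$ to cross transversally, i.e. $\nabla f$ and $\nabla g$ are linearly independent at $(x^\ast,y^\ast)$, so $\frac{\partial f}{\partial x}\frac{\partial g}{\partial y}-\frac{\partial f}{\partial y}\frac{\partial g}{\partial x}\neq 0$ and hence $\det J\neq 0$: every interior equilibrium is nondegenerate. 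For such a planar critical point the Poincaré index equals $\operatorname{sgn}(\det J)$, so the index $+1$ points of Theorem~\ref{thm:equilibrium} are exactly those with $\det J>0$, and the index $-1$ points those with $\det J<0$.

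Combining the two facts finishes the classification. At an index $+1$ equilibrium we have $\det J>0$ together with $\operatorname{tr}J<0$; since the eigenvalues satisfy $\lambda_1\lambda_2=\det J>0$ and $\lambda_1+\lambda_2=\operatorname{tr}J<0$, both have strictly negative real part, so the point is asymptotically stable, turning the index $+1$ points into attractive equilibria. At an index $-1$ equilibrium $\det J<0$ forces two real eigenvalues of opposite sign, i.e. a saddle. Feeding this into the alternation already proved in Theorem~\ref{thm:equilibrium} yields the claimed alternation of attractive equilibria and saddles along the isoclines. The one genuinely delicate point is the word \emph{node}: an attractive equilibrium is a node rather than a focus only when the discriminant $\big(x^\ast\frac{\partial f}{\partial x}-y^\ast\frac{\partial g}{\partial y}\big)^2+4x^\ast y^\ast\frac{\partial f}{\partial y}\frac{\partial g}{\partial x}$ is nonnegative; this is automatic whenever $\frac{\partial f}{\partial y}\frac{\partial g}{\partial x}\geq 0$ (in particular in the locally mutualistic regime), and I expect this sign bookkeeping at the equilibria to be the main obstacle to ruling out stable spirals in the remaining mixed-sign configurations.
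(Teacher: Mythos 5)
Your proof follows essentially the same route as the paper's: compute the Jacobian at an interior equilibrium, observe that the intraspecific-competition signs force $\operatorname{tr} J = x^\ast\frac{\partial f}{\partial x}+y^\ast\frac{\partial g}{\partial y}<0$, and let the sign of $\det J$ (matched to the index $\pm 1$ from Theorem~\ref{thm:equilibrium}) separate attractive equilibria ($\det J>0$) from saddles ($\det J<0$). You are in fact more explicit than the paper on two points it leaves implicit — the identification of the index with $\operatorname{sgn}(\det J)$ at nondegenerate equilibria, and the caveat that with mixed-sign cross-derivatives a point with $\operatorname{tr} J<0$, $\det J>0$ may be a stable focus rather than a node in the strict sense — so your proposal matches the paper's argument and, if anything, is more complete.
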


\begin{remark}
For a model satisfying the conditions of Corollary \ref{cor:attrac}, if there are at least two equilibrium points in the positive quadrant, then an attractive point exists. 
This implies the existence of conditions under which the coexistence of the two species is possible.
\end{remark}

Thus, negative assumptions on the sign of the partial derivatives $\frac{\partial f}{\partial x} < 0$ and $\frac{\partial g}{\partial y} < 0$ (which biologically correspond to intraspecific competition, always negative regardless of species densities) prevent the emergence of any coexistence solutions other than attractive equilibrium points or saddle points.
As in Proposition \ref{prop:equilibrium_axes}, this corollary can be extended to the coordinate axes, where the same monotonicity assumptions guarantee uniqueness and attractivity of equilibria.

\begin{cor}\label{cor:attrac_axes}
Under the assumptions of Corollary \ref{cor:attrac}, if an equilibrium point exists on a coordinate axis in $\R_{+}\times\R_{+}$, then it is unique on that axis. 
Moreover, if this equilibrium has index $+1$, then it is an attractive node.
\end{cor}

Under the opposite constant sign assumptions, \textit{i.e.} intraspecific cooperation for both species as described in Section \ref{subsec:Key_Assumptions_Def}, the system in the positive quadrant alternates between repulsive nodes and saddle points along the isoclines.

\begin{cor}\label{cor:repul}
    Let a dynamical system in $\R_+ \times \R_+$ be described by \eqref{eq:init}, with functions $f$ and $g$ both $\cun$ satisfying hypotheses \ref{hyp:one} to \ref{hyp:four}, and $\frac{\partial f}{\partial x} > 0$ and $\frac{\partial g}{\partial y} > 0$. 
    Then, in the positive quadrant, the system exhibits an alternation of repulsive nodes and saddle points along the isoclines.
\end{cor}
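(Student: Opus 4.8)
The plan is to mirror the proof of Corollary \ref{cor:attrac}, exchanging the roles of attractors and repellers throughout. The key structural fact is already provided by Theorem \ref{thm:equilibrium}: along the isoclines the equilibria alternate between index $+1$ and index $-1$. The index $-1$ points are saddles in either corollary, so the entire content of the statement is to show that, under the intraspecific cooperation assumption $\frac{\partial f}{\partial x}>0$ and $\frac{\partial g}{\partial y}>0$, every index $+1$ equilibrium is a repulsive node. Once this is established, the alternation of repulsive nodes and saddles follows immediately from the theorem.

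To identify the nature of an index $+1$ equilibrium $(x^*,y^*)$ (with $x^*,y^*>0$, so that it lies in the interior and is a transverse intersection of $\Gamma_f$ and $\Gamma_g$ by \ref{hyp:three}--\ref{hyp:four}), I would compute the Jacobian of the vector field $(xf,\,yg)$ at that point. Since $f(x^*,y^*)=g(x^*,y^*)=0$, the diagonal entries simplify to $x^*\frac{\partial f}{\partial x}$ and $y^*\frac{\partial g}{\partial y}$, while the off-diagonal entries are $x^*\frac{\partial f}{\partial y}$ and $y^*\frac{\partial g}{\partial x}$, all evaluated at $(x^*,y^*)$. The trace is therefore $x^*\frac{\partial f}{\partial x}+y^*\frac{\partial g}{\partial y}$, which is strictly positive because both partials are positive and $x^*,y^*>0$. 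The index of a simple equilibrium equals the sign of the determinant of the Jacobian, so an index $+1$ point has positive determinant. A $2\times 2$ matrix with positive determinant and positive trace has two eigenvalues with positive real part (they are either both real positive or a complex-conjugate pair with positive real part), hence the equilibrium is a repulsive node (or repulsive focus), which establishes the claim.

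The main obstacle, and the point requiring care, is the same one as in Corollary \ref{cor:attrac}: justifying that the \emph{index} of the linearized equilibrium coincides with the sign of the determinant, and that a positive determinant indeed corresponds to index $+1$ rather than $-1$. Here I would invoke the standard index-theory identification recalled in Appendix \ref{appendix:Index}, namely that for a hyperbolic equilibrium the Poincaré index equals $\operatorname{sign}(\det J)$, so index $+1$ forces $\det J>0$ (nodes and foci) and index $-1$ forces $\det J<0$ (saddles). One must also confirm non-degeneracy of $J$, i.e. that $\det J\neq 0$: this is guaranteed by \ref{hyp:three} and \ref{hyp:four}, which ensure $\Gamma_f$ and $\Gamma_g$ cross transversally at $(x^*,y^*)$, so the gradients of $f$ and $g$ are linearly independent there and the equilibrium is simple. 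With these ingredients, the sign of the trace is exactly what upgrades "index $+1$" to "repulsive," the only substantive difference from the attractive case being that the cooperation hypothesis flips the sign of the trace relative to the competition hypothesis.

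I would close by noting that the argument is entirely local and pointwise: it applies verbatim to every index $+1$ equilibrium in the positive quadrant, so combining it with the global alternation from Theorem \ref{thm:equilibrium} yields the stated alternation of repulsive nodes and saddle points along the isoclines.
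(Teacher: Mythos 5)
Your proposal is correct and follows essentially the same route as the paper: the paper omits this proof, noting it is ``very similar'' to that of Corollary \ref{cor:attrac}, which computes the Jacobian at an interior equilibrium (using $f=g=0$ to simplify the diagonal), reads off the sign of the trace from the intraspecific hypothesis, and distinguishes node from saddle by the sign of the determinant, exactly as you do with the signs flipped. Your additional care in making explicit the identification of index $+1$ with $\det J>0$ (and the non-degeneracy of $J$) fills in steps the paper leaves implicit, but does not change the argument.
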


This reflects a dynamic where intraspecific cooperation leads to instability in some regions, while stability is limited to saddle point behavior in others.
A saddle point represents an unstable equilibrium where the system converges along one direction (stable) but diverges along another (unstable). 
Under these assumptions, this reflects a fragile coexistence scenario where small perturbations can push the system away from equilibrium, leading to instability in the dynamics of the two species.

This corollary can also be extended to the coordinate axes: the same monotonicity properties imply that any equilibrium lying on an axis is unique there, and if it has index (+1), it is a repulsive equilibrium.
\begin{cor}\label{cor:repul_axes}
Under the assumptions of Corollary \ref{cor:repul}, if an equilibrium point exists on a coordinate axis in $\R_{+}\times\R_{+}$, then it is unique on that axis. 
Moreover, if this equilibrium has index $+1$, then it is a repulsive node.
\end{cor}

\subsection{Illustration of Theorem \ref{thm:equilibrium} with Mutualistic Models from the Literature}
\label{examples}

In this section, we illustrate Theorem \ref{thm:equilibrium} using four models of mutualistic interactions reviewed in \cite{hale2021ecological} (see Table \ref{tab:saturing} and Table \ref{tab:non-mono}), each represented on Figure \ref{fig:global} (in the order presented), each emphasizing a different characteristic of mutualism.
Although their mathematical formulation reflects different perspectives on mutualistic interactions, as discussed in Section \ref{subsec:Key_Assumptions_Def}, we show that they satisfy the assumptions of system \eqref{eq:init}, highlighting the generality of the framework provided in our study.

\begin{itemize}
    \item[(a)] In \cite{zhang2003mutualism} the author models the dynamics of two mutualistic species using the same mathematical formulation twice, reflecting an interaction between species at the same trophic level.
In this model, the growth rate of each species is influenced by a quadratic term that depends on the density of the other species.
The intraspecific competition is represented by a constant term, while the interaction between species varies in sign: it remains positive up to a threshold, beyond which it becomes negative. 
This formulation captures the effect of saturation and illustrates the continuum between mutualism and parasitism.
Here, there is the assumption of intraspecific competition, where $\frac{\partial f}{\partial x} < 0$ and $\frac{\partial g}{\partial y} < 0$.
Moreover, the interaction is mutualistic as long as $y$ remains below the threshold $b_1$, but when $y$ exceeds this value, $\frac{\partial f}{\partial y}$ turns negative. 
This dynamic highlights how the strength and nature of the interaction shift depending on species densities.

 \item[(b)] \cite{neuhauser2004mutualism} introduce a model describing facultative mutualism between plants ($x$) and mycorrhizal fungi ($y$). 
In this formulation, the effects on species $y$ remain constant, regardless $x$ or $y$, with intraspecific competition always negative $\left(\frac{\partial g}{\partial y} < 0\right)$ and the mutualistic benefit from $x$ always positive $\left(\frac{\partial g}{\partial x} > 0\right)$.
Similarly, intraspecific competition for $x$ is constant and negative $\left(\frac{\partial f}{\partial x} < 0\right)$. 
However, the term $\frac{\partial f}{\partial y}$, which characterizes the nature of the interaction, varies depending on the densities of both species.  Specifically, as $y$ increases, the benefit for $x$ diminishes and can eventually turn negative, shifting the interaction away from mutualism. 
However, this effect is modulated by the presence of $x$: for huge values of $x$, $\frac{\partial f}{\partial y}$ is more strongly positive, reinforcing mutualistic benefits.
$x$ and $y$ must increase together to prevent the relationship from becoming parasitic.

\item[(c)]The model proposed by \cite{graves2006bifurcation} represents obligate mutualism for lichens, where the benefits of the interaction are tightly linked to the densities of both species. 
The system assumes constant negative intraspecific competition, with identical functional forms for both species, and the mutualistic interaction varies with the density of the other species. 
Specifically, as the density of $y$ increases, the effect on $x$ decreases, but without any saturation effect—just a continuous increase with density.  
Here, the interaction benefits are regulated by an exponential term, which prevents unbounded growth as species densities increase. 

\item[(d)]A more general consumer-resource approach is presented by \cite{nathaniel2009consumer, holland2010}, which models the bidirectional exchange of benefits in plant-mycorrhizal systems. 
In this framework, one species provides resources while the other enhances nutrient uptake efficiency. 
The system follows a similar functional form for both species, but the dynamics are more complex as all aspects—including the intraspecific competition—vary with the densities of both species.
In this model, the mutualistic interaction is obligate when the intrinsic growth rates are negative ($r_1 \leq 0$, $r_2 \leq 0$).
The interaction structure is dynamic, with the intraspecific competition for $x$ decreasing as $y$ grows, but increasing as $x$ increases. 
Conversely, the effect of $y$ on $x$ increases with $x$, but decreases as $y$ increases. 
This creates a more flexible framework for describing how species' interactions evolve with changing densities. 
\end{itemize}

Depending on the biological aspects these models aim to represent, the mathematical forms differ, leading to a varying number of equilibrium points. 
However, all of them satisfy conditions \ref{hyp:one} to \ref{hyp:four} and exhibit an alternation of equilibrium point indices along the isoclines, as a result leading to one or more stable coexistence configurations.

\begin{figure}[h!]
    \centering
    \begin{subfigure}{0.45\textwidth}
        \centering
        \includegraphics[width=\textwidth]{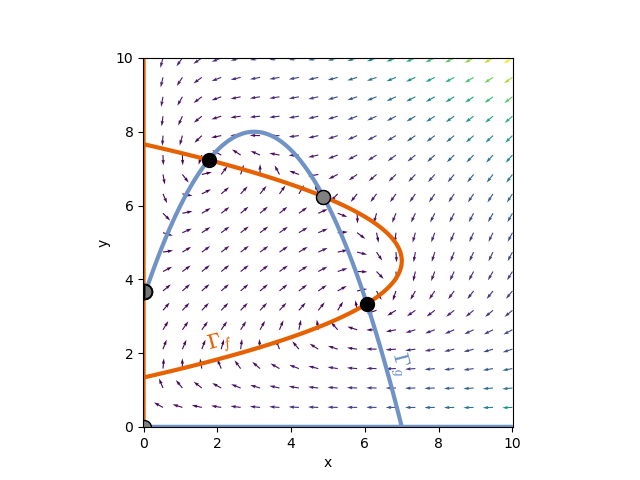}
        \caption{Mutualism at the same trophic level.}
        \label{fig:Zhang}
    \end{subfigure}
    \hfill
    \begin{subfigure}{0.45\textwidth}
        \centering
        \includegraphics[width=\textwidth]{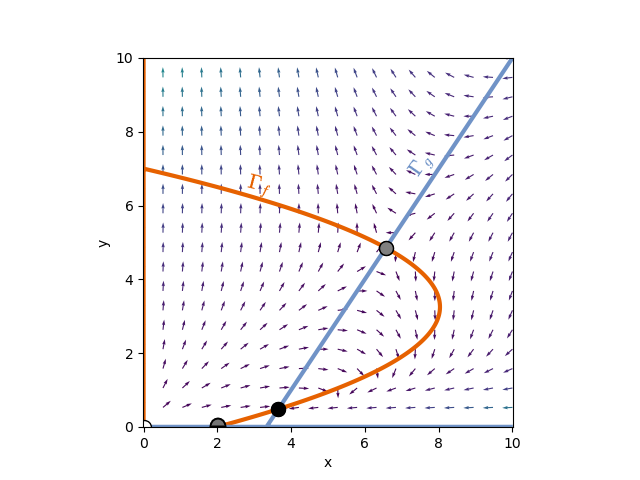}
        \caption{Plant–mycorrhiza, facultative mutualism.}
        \label{fig:Neuhauser}
    \end{subfigure}
    
    \vspace{0.5cm}  

    \begin{subfigure}{0.45\textwidth}
        \centering
        \includegraphics[width=\textwidth]{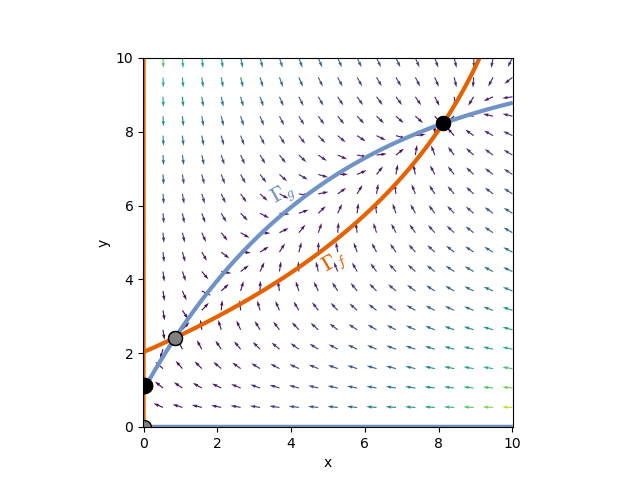}
        \caption{Obligate mutualism in lichens.}
        \label{fig:Graves}
    \end{subfigure}
    \hfill
    \begin{subfigure}{0.45\textwidth}
        \centering
        \includegraphics[width=\textwidth]{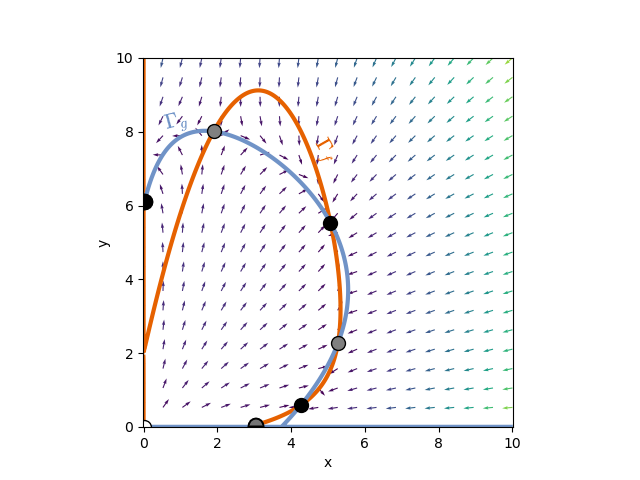}
        \caption{Consumer-resource approach.}
        \label{fig:Holland}
    \end{subfigure}    
    \caption{\textit{Phase portraits of four differents models of mutualism.} Each model highlights different aspects of mutualism while satisfying the assumptions \ref{hyp:one} to \ref{hyp:four} of system \eqref{eq:init} and exhibiting an alternation of equilibrium point indices along the isoclines.
    The curve $\Gamma_g$ is shown in blue, and $\Gamma_f$ in orange.
    White points represent repulsive points, black points represent attractive points and gray points represent saddle points.
    (\ref{fig:Zhang}) Parameters for \cite{zhang2003mutualism} model: $c_1 = 7$, $c_2 = 8$, $a_1 = 0.7$, $a_2 = 0.5$, $b_1 = 4.5$, $b_2 = 3$, $r_1 = 7$, $r_2 = 4$.
    (\ref{fig:Neuhauser}) For \cite{neuhauser2004mutualism} model: $r_1 = 1.4$, $r_2 = 1$, $K_1 = 2$, $K_2 = 5$, $\gamma_{12} = 4$, $\alpha_{12} = 1.5$, $a = 0.2$.
    (\ref{fig:Graves}) For \cite{graves2006bifurcation} model: $r_{10} = -2$, $r_{20} = 0.5$, $r_{11} = 4$, $r_{21} = 5$, $k_1 = 0.2$, $k_2 = 0.2$, $a_1 = 0.35$, $a_2 = 0.5$.
    (\ref{fig:Holland}) For \cite{holland2010} model: $r_1 = 2$, $r_2 = 3$, $c_1 = 3$, $c_2 = 1$, $a_{12} = 1$, $a_{21} = 2.6$, $h_1 = 1$, $h_2 = 1$, $q_1 = 1$, $q_2 = 2.7$, $\beta_{12} = 1$, $\beta_{21} = 1.5$, $e_1 = 0.5$, $e_2 = 3$, $s_1 = 0.7$, $s_2 = 0.5$.}
    \label{fig:global}
\end{figure}
\newpage
\section{Periodic solutions in the dynamics of two mutualistic species}
\label{sec:Mutualistic}

In this section, we clarify the role of extended mutualism (Definition \ref{def:mut}) and contrast it with the more constrained definition of strict mutualism.
Specifically, we demonstrate that under the strict mutualism assumptions, where both $\frac{\partial f}{\partial y}>0$ and $\frac{\partial g}{\partial x}>0$ in a same region of the positive quadrant, the system cannot give rise to periodic solutions such as limit cycles in its neighborhood.

\begin{remark}
Here we aim at investigating attractor limit cycles. 
A limit cycle is an isolated closed trajectory that can be stable, unstable or half-stable for nearby trajectories. 
When the limit cycle is stable, trajectories that start close to it tend to converge towards it over time, which stabilizes the oscillation. 
Thus, attractive limit cycles represent stable periodic oscillations, where populations repeatedly return to the same cycle of variation.
In mutualistic interactions, the presence of limit cycles might suggest that the populations of both species oscillate over time in a stable way, without extinction or explosive growth.
Hereafter, we will refer to limit cycles exclusively as attractive limit cycles.  
\end{remark}

\begin{remark}
Limit cycles differ from centers, such as those observed in the Lotka-Volterra predator-prey system.
Centers are equilibrium points where the trajectories of the system form closed, periodic orbits, but they are neither attractive nor repulsive. 
As a result, the behavior of the system is highly sensitive to initial conditions: each new set of initial conditions produces a distinct trajectory, making the system highly sensitive to small variations \citep{may1972}. 
However, even in the case of centers, when the Lotka-Volterra predator-prey model is modified to represent mutualistic interactions, the centers disappear. 
\end{remark}

We recall that periodic solutions can only surround a region with an index of $+1$ (Section \ref{subsec:General_Assumptions}, Figure \ref{fig:PhasePortrait_Index+1}).
However, equilibrium points do not necessarily generate an attractive limit cycle around them. 
Limit cycle will arise around a repulsive equilibrium point, where nearby trajectories diverge from the equilibrium, but are eventually captured by the limit cycle.

\subsection{Case of strict mutualism}
\label{subsec:Strict_Mutualism}

We first study whether a limit cycle can be observed under the assumption that the repulsive equilibrium point is strictly mutualistic, that is, under the assumption $\frac{\partial f}{\partial y} > 0$ and $\frac{\partial g}{\partial x} > 0$.

\begin{thm}
\label{thm:no_cycle_lim_1}
    Let a dynamical system in $\R_+ \times \R_+$ be described by \eqref{eq:init}, with functions $f$ and $g$ satisfying hypotheses \ref{hyp:one} to \ref{hyp:four}.
    We assume that the curves $\Gamma_f$ and $\Gamma_g$ intersect at a repulsive equilibrium point $\left(x^*, y^* \right)$, with $\frac{\partial f}{\partial y} (x^*, y^*) > 0$ and $\frac{\partial g}{\partial x} (x^*, y^*) > 0$. 
    Then, there exists a region $R_1$ in the phase portrait containing this point, in which the partial derivatives do not change sign, and no limit cycle exists that surrounds this point within $R_1$.
\end{thm}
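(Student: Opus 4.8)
\section*{Proof proposal}

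The plan is to rule out closed orbits by Dulac's criterion, using the Dulac function $B(x,y) = 1/(xy)$, which is smooth and strictly positive throughout the open positive quadrant. The crucial preliminary observation is that the hypotheses of the theorem — the point $(x^*,y^*)$ being repulsive \emph{and} strictly mutualistic — already pin down the signs of the intraspecific terms $\partial f/\partial x$ and $\partial g/\partial y$ at the equilibrium, and this is exactly what makes the Dulac computation come out with a definite sign.

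First I would write down the Jacobian of \eqref{eq:init} at $(x^*,y^*)$. Since $f(x^*,y^*)=g(x^*,y^*)=0$ there, it reduces to
\[
J = \begin{pmatrix} x^* f_x & x^* f_y \\ y^* g_x & y^* g_y \end{pmatrix},
\]
all partial derivatives being evaluated at $(x^*,y^*)$. A repulsive equilibrium has both eigenvalues with positive real part, hence $\det J > 0$ and $\operatorname{tr} J > 0$. From $\det J = x^* y^* (f_x g_y - f_y g_x) > 0$ together with the strict mutualism assumption $f_y>0,\ g_x>0$ (so that $f_y g_x>0$), one gets $f_x g_y > f_y g_x > 0$, so $f_x$ and $g_y$ share a common sign; the trace condition $x^* f_x + y^* g_y > 0$ then rules out the common negative sign, leaving $f_x(x^*,y^*)>0$ and $g_y(x^*,y^*)>0$. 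Since $f$ and $g$ are $\cun$, all four partial derivatives are continuous, so there is a simply connected neighborhood $R_1 \subset (0,\infty)^2$ of $(x^*,y^*)$ on which $f_x,\, f_y,\, g_x,\, g_y$ all retain their positive sign; this is the region in which the partial derivatives do not change sign.

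It then remains to apply Dulac's criterion on $R_1$. With $B = 1/(xy)$ one computes
\[
\frac{\partial}{\partial x}\!\left(B\, x f\right) + \frac{\partial}{\partial y}\!\left(B\, y g\right) = \frac{\partial}{\partial x}\!\left(\frac{f}{y}\right) + \frac{\partial}{\partial y}\!\left(\frac{g}{x}\right) = \frac{f_x}{y} + \frac{g_y}{x}.
\]
On $R_1$ we have $x,y>0$ and $f_x,g_y>0$, so this quantity is strictly positive throughout $R_1$. Because $R_1$ is simply connected and the expression keeps a constant (positive) sign, Dulac's criterion forbids any closed orbit contained entirely in $R_1$; in particular no limit cycle surrounds $(x^*,y^*)$ within $R_1$.

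The main obstacle — and the only genuinely nontrivial step — is the sign analysis of the intraspecific derivatives: the theorem never assumes intraspecific cooperation, so one must \emph{derive} $f_x(x^*,y^*)>0$ and $g_y(x^*,y^*)>0$ from the combination of repulsiveness and strict mutualism through the trace/determinant argument above. Once that is established, the choice $B=1/(xy)$ (natural precisely because the factors $x$ and $y$ multiplying $f$ and $g$ are what one wants to cancel) reduces the Dulac divergence to a quantity depending only on $f_x$ and $g_y$, and the conclusion follows at once.
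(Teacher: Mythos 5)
Your proof is correct, and its core differs genuinely from the paper's own argument. Both proofs share the same preliminary step: from repulsiveness ($\operatorname{tr} J>0$, $\det J>0$) together with strict mutualism ($f_y>0$, $g_x>0$ at the point) one deduces $f_x(x^*,y^*)>0$ and $g_y(x^*,y^*)>0$ — the paper reads this off from Table \ref{tab:eq_iso}, while you derive it directly from the trace/determinant inequalities, which is more self-contained; after that, the routes diverge. The paper stays elementary and geometric: the implicit function theorem makes both isoclines locally decreasing, only two local configurations are compatible with the index $+1$ sign alternation of Figure \ref{fig:PhasePortrait_Index+1}, and each is excluded by a monotonicity contradiction (a $\cun$ function $h(t)=f(u+t(v-u))$ with $h(0)>0>h(1)$ yet $h'>0$ throughout). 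You instead invoke the Bendixson--Dulac criterion with the Kolmogorov-type multiplier $B=1/(xy)$, for which
\begin{equation*}
\nabla\cdot\bigl(B\,(xf,\,yg)\bigr)=\frac{f_x}{y}+\frac{g_y}{x}>0
\end{equation*}
on a simply connected neighborhood inside the open quadrant. Your route is shorter, requires no case analysis or figures, and proves something slightly stronger: no closed orbit lies entirely in $R_1$ at all, whether or not it encircles $(x^*,y^*)$. Moreover, the identical computation disposes of Theorem \ref{thm:no_cycle_lim_2}, where $f_x<0$ and $g_y<0$ make the same divergence strictly negative, so a single Dulac computation unifies both non-existence results; what the paper's argument buys in exchange is independence from Green's theorem and continuity with its index-theoretic, isocline-based style of analysis. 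One minor point: your sign analysis treats "repulsive" as hyperbolically repulsive (strict inequalities for trace and determinant), which matches the paper's own usage (e.g., its assumption 1.9); the degenerate alternatives ($\operatorname{tr} J=0$ or $\det J=0$) are either incompatible with $f_y g_x>0$ or still force $f_x,g_y>0$, so nothing is lost.
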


A similar result can be found with an attractive equilibrium point surrounded by a repulsive limit cycle. 
If a repulsive limit cycle exist, then an attractive limit cycle could be found around this point, with an alternation between an attractive point, a repulsive limit cycle, and then an attractive limit cycle.

\begin{thm}
\label{thm:no_cycle_lim_2}
    Let a dynamical system in $\R_+ \times \R_+$ be described by \eqref{eq:init}, with functions $f$ and $g$ satisfying hypotheses \ref{hyp:one} to \ref{hyp:four}.
    We assume that the curves $\Gamma_f$ and $\Gamma_g$ intersect at an attractive equilibrium point $\left(x^*, y^* \right)$, with $\frac{\partial f}{\partial y} (x^*, y^*) > 0$ and $\frac{\partial g}{\partial x} (x^*, y^*) > 0$. 
    Then, there exists a region $R_2$ in the phase portrait containing this point, in which the partial derivatives do not change sign, and no limit cycle exists that surrounds this point within $R_2$.
\end{thm}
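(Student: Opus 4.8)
The plan is to establish this as a \emph{local} non-existence result via the Bendixson--Dulac criterion, with the crucial preliminary step of extracting sign information on the \emph{intraspecific} partial derivatives $\partial f/\partial x$ and $\partial g/\partial y$ from the hypothesis that $(x^*,y^*)$ is attractive. Writing $f_x,f_y,g_x,g_y$ for the partial derivatives evaluated at $(x^*,y^*)$, and noting that $f(x^*,y^*)=g(x^*,y^*)=0$, the Jacobian of \eqref{eq:init} at the equilibrium is
\[
J = \begin{pmatrix} x^* f_x & x^* f_y \\ y^* g_x & y^* g_y \end{pmatrix},
\]
where $x^*,y^*>0$ since \ref{hyp:three} places the equilibrium off the axes.

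Next I would translate attractiveness into inequalities: an attractive (sink) equilibrium forces $\mathrm{tr}\,J<0$ and $\det J>0$, that is $x^*f_x+y^*g_y<0$ and $f_x g_y-f_y g_x>0$. Invoking the strict-mutualism assumption $f_y>0$ and $g_x>0$, the determinant inequality gives $f_x g_y > f_y g_x > 0$, so $f_x$ and $g_y$ must share a common sign; the trace inequality then rules out both being positive, forcing $f_x(x^*,y^*)<0$ and $g_y(x^*,y^*)<0$. In other words, attractiveness together with strict mutualism automatically imposes intraspecific competition at the equilibrium, placing us back in the setting of Corollary~\ref{cor:attrac} locally.

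Since $f,g\in\cun$, all four partials keep their strict signs on a small open disk $R_2$ around $(x^*,y^*)$, and shrinking $R_2$ if necessary keeps it inside the open positive quadrant, so that $x,y>0$ and $R_2$ is simply connected. I would then apply Bendixson--Dulac with the standard Kolmogorov Dulac function $\phi(x,y)=1/(xy)$, which is $\cun$ and positive on $R_2$. Since $\phi\,\dot x = f/y$ and $\phi\,\dot y = g/x$, one computes
\[
\frac{\partial}{\partial x}\!\left(\frac{f}{y}\right) + \frac{\partial}{\partial y}\!\left(\frac{g}{x}\right) = \frac{f_x}{y} + \frac{g_y}{x} < 0
\]
throughout $R_2$, because $f_x<0$, $g_y<0$ and $x,y>0$. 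As this divergence has constant (negative) sign on the simply connected region $R_2$, the Bendixson--Dulac theorem forbids any closed orbit contained in $R_2$; in particular no limit cycle surrounding $(x^*,y^*)$ can lie within $R_2$, which is the claim.

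The hard part will be the second step: recognizing that attractiveness, combined with the strict-mutualism hypothesis, pins down the intraspecific signs $f_x<0$ and $g_y<0$ — everything after that is the routine Dulac computation. A secondary point requiring care is the non-hyperbolic borderline, where the eigenvalues have zero real part and the strict trace/determinant inequalities could degenerate; this can be handled by reading ``attractive'' as asymptotic stability of a hyperbolic sink (as opposed to a center), so that $\mathrm{tr}\,J<0$ and $\det J>0$ hold strictly, or by treating any degenerate configuration separately. This argument is the exact mirror of the repulsive case in Theorem~\ref{thm:no_cycle_lim_1}, with the trace inequality reversed so that $f_x,g_y>0$ and the Dulac divergence becomes strictly positive.
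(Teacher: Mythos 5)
Your proof is correct, but its core step takes a genuinely different route from the paper's. The first half --- extracting $f_x(x^*,y^*)<0$ and $g_y(x^*,y^*)<0$ from attractiveness combined with $f_y>0$, $g_x>0$ via the trace and determinant of the Jacobian --- is exactly what the paper does, except that the paper cites its precomputed classification (Table \ref{tab:eq_iso}) rather than rederiving it; your explicit derivation (determinant forces $f_x$, $g_y$ to share a sign, trace rules out the positive case) is the content behind that table entry. After that the two arguments diverge. The paper stays geometric: it uses the implicit function theorem to conclude both isoclines are increasing near $(x^*,y^*)$, enumerates the two local sign-alternation patterns compatible with an index $+1$ equilibrium, and rules each out by a monotonicity contradiction (a $\cun$ function $h(t)=f(u+t(v-u))$ would have strictly positive derivative yet pass from a positive to a negative value). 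You instead invoke Bendixson--Dulac with the Kolmogorov Dulac function $1/(xy)$, computing $\partial_x(f/y)+\partial_y(g/x)=f_x/y+g_y/x<0$ on a simply connected disk $R_2$ inside the open quadrant. Both are valid; your route is shorter and in one respect stronger, since it excludes \emph{every} closed orbit contained in $R_2$, not only those enclosing $(x^*,y^*)$, and it avoids the enumeration of local phase portraits; the paper's argument is more elementary (no Green's theorem) and meshes with the index-theoretic framework used throughout the text. One remark on the borderline case you flag: the caution is unnecessary. Asymptotic stability already forces the weak inequalities $\mathrm{tr}\,J\le 0$ and $\det J\ge 0$; since $f_y g_x>0$ holds strictly, $\det J\ge 0$ gives $f_x g_y\ge f_y g_x>0$, so $f_x$ and $g_y$ are nonzero and of the same sign, and $\mathrm{tr}\,J\le 0$ then forces both negative --- no hyperbolicity assumption is needed for your sign extraction, and the Dulac step never required it.
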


Thus, around a point with index +1 (attractive or repulsive), in a system \eqref{eq:init} where the conditions on the partial derivatives do not change, we cannot find periodic solutions such as limit cycles.
However, the Definition \ref{def:mut} of extended mutualism includes the case of strict mutualism.  
We simply deduce that in the region where both species have a strictly positive effect on each other and in which the partial derivatives do not change sign, no periodic solutions such as limit cycles can arise.  
We will now explore what happens outside this region.

\subsection{Case of extended mutualism}
\label{subsec:Extended_Mutualism}

In this section, we use the extended definition of mutualism provided in Definition \ref{def:mut}, which includes the previous case of strict mutualism and describes partially mutualistic interactions, where the  interactions between species can be mutualistic only for certain combinations of species densities.
As previously discussed, the emergence of a limit cycle can indeed only occur in a context where the interaction is either parasitic or competitive, and not in a region where both $\frac{\partial f}{\partial x}>0$ and $\frac{\partial g}{\partial y}>0$ simultaneously and remain unchanged. 
Here, we consider density-dependent effects that give rise to regions where the mutualistic interaction locally shifts to a parasitic one.
We identify two different scenarios where the nature of the interaction is not fixed, i.e., where $\frac{\partial f}{\partial y}$ and $\frac{\partial g}{\partial x}$ change signs, leading to the emergence of a limit cycle.
In a related but more specific modeling framework, \cite{song2025} described another example of limit cycle dynamics, based on a detailed mutualism model in which variations in interaction costs and benefits are explicitly analyzed, and shown to induce or suppress cyclic behavior.

We also introduce the assumption that intraspecific competition can change sign and become intraspecific cooperation (for example, $\frac{\partial g}{\partial y}$ may change sign). 
In mutualism, this can occur if an increase in the population of species $x$ leads to a higher demand for individuals of species $y$ to support the interaction.
To investigate this scenario, we focus on a single equilibrium point in the population quadrant, excluding the one at $(0,0)$ and those on the axes. 

\paragraph{}For our first configuration, we make the following assumptions: 

\begin{enumerate}[label=1.\arabic*]
    \item $\frac{\partial f}{\partial x}<0$: Intraspecific competition, the density of species $x$ is always limited by the intraspecific competition among individuals.
    \item $\frac{\partial g}{\partial y}>0$ then $\frac{\partial g}{\partial y}<0$: At low density, species $y$ experiences self-cooperation rather than competition, but at higher density, negative effects do emerge.
    \item $\frac{\partial g}{\partial x}>0$: Species $x$ always has a positive effect on the density of species $y$.
    \item $\frac{\partial f}{\partial y}>0$ then $\frac{\partial f}{\partial y}<0$: At low density, species $y$ has a positive effect on species $x$, but as its density increases, the interaction shifts from mutualism to parasitism. 
    \item $f\left(0,y_1\right)=0$: Above the threshold $y_1$, the population of $x$ declines, regardless of whether its own density is low or high.
    \item $g\left(x_1,0\right)=0$:  
    The threshold $x_1$ of species $x$ required for species $y$ to persist at low density. 
    \item $f\left(x_2,0\right)=0$: There exists an equilibrium at $x_2$ where species $x$ can persist in the absence of species $y$.
    Beyond $x_2$, the species $x$ declines due to overpopulation.
    \item $x_1<x_2$: The threshold density for species $x$ to persist in isolation ($x_2$) is higher than the threshold where species $x$ can sustain $y$ ($x_1$). 
    Otherwise, species $y$ goes extinct.
    \item $\left(x^*, y^*\right)$ is a repulsive equilibrium point: 
    \begin{equation*}
        x^* \frac{\partial f}{\partial x} \left(x^*,y^*\right)+y^* \frac{\partial g}{\partial y}\left(x^*,y^*\right)>0,
    \end{equation*}
    and
    \begin{equation*}
        x^* y^* \left( \frac{\partial f}{\partial x} \left(x^*,y^*\right) \frac{\partial g}{\partial y}\left(x^*,y^*\right) - \frac{\partial f}{\partial y} \left(x^*,y^*\right) \frac{\partial g}{\partial x}\left(x^*,y^*\right) \right) >0.
    \end{equation*}
    This condition ensures that small perturbations around $(x^*, y^*)$ will lead to divergence, favoring oscillatory or cyclic behavior. 
    \item The isoclines $f=0$ and $g=0$ have the shapes illustrated in Figure \ref{fig:cycle_lim_1}.
    \begin{figure}[h!]
    \centering
    \includegraphics[width=0.46\linewidth]{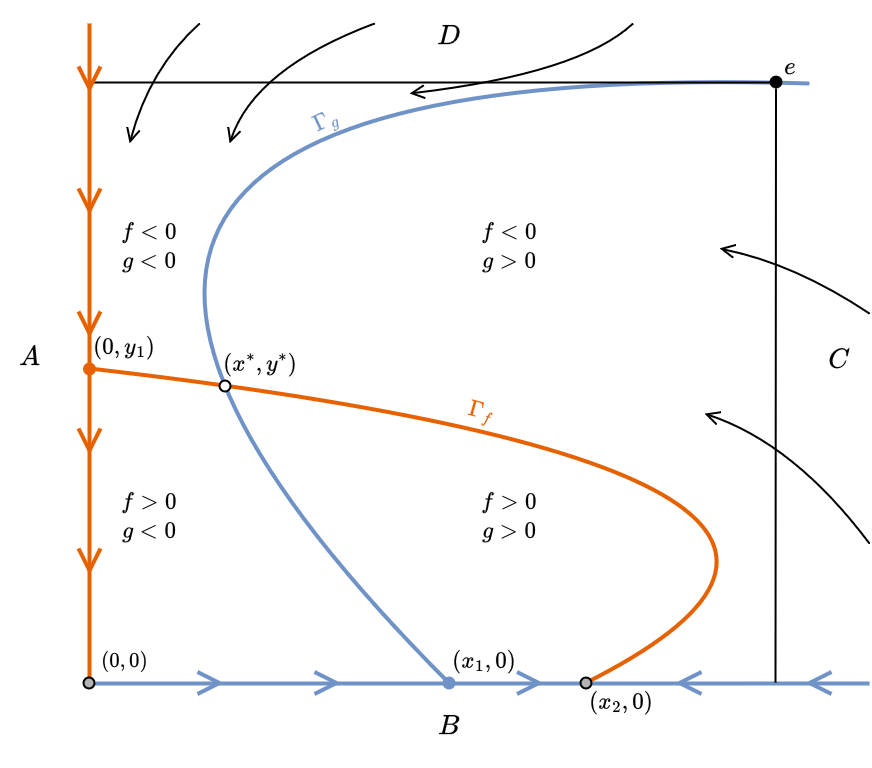}
    \caption{\textit{Phase portrait assuming extended mutualism leading to a cyclic behaviour (illustration of 1.10 of the rectangular region $ABCD$).}  
    The curve $\Gamma_g$ is shown in blue, and $\Gamma_f$ in orange.
    By assumption 1.9, $(x^*, y^*)$ is a repulsive point (the white point), which implies, by Theorem \ref{thm:equilibrium}, that $(x_2,0)$ is a saddle point (the grey point), as well as $(0,0)$.}
    \label{fig:cycle_lim_1}
    \end{figure}
\end{enumerate}

\newpage
For our second configuration, we make the following assumptions: 

\begin{enumerate}[label=2.\arabic*]
    \item $\frac{\partial g}{\partial y}<0$: Intraspecific competition for species $y$.
    \item $\frac{\partial f}{\partial x}>0$ then $\frac{\partial f}{\partial x}<0$: At low density, species $x$ experiences self-cooperation rather than competition, but at higher density, negative effects emerge.
    \item $\frac{\partial f}{\partial y}>0$: Species $y$ always has a positive effect on species $x$.
    \item $\frac{\partial g}{\partial x}<0$ then $\frac{\partial g}{\partial x}>0$: At low density, species $x$ has a negative effect on species $y$, but as its density increases, the interaction shifts from parasitism to mutualism. 
    \item $f(0, y_1) = 0$: Beyond the threshold $y_1$, the population of $x$ grows due to mutualism, regardless of its own density.
    \item $g\left(0,y_2\right)=0$: There exists an equilibrium at $y_2$ where species $y$ can persist in the absence of species $x$.
    Beyond $y_2$, the species $y$ declines due to overpopulation.
    \item $y_1<y_2$: The threshold density for species $y$ to persist in isolation ($y_2$) is higher than the threshold where species $y$ can sustain $x$ ($y_1$). 
    Otherwise, species $x$ goes extinct.
    \item $\left(x^*, y^*\right)$ is a repulsive equilibrium point.
    \item The isoclines $f=0$ and $g=0$ have the shapes illustrated in Figure \ref{fig:cycle_lim_2}.
    \begin{figure}[h!]
    \centering
    \includegraphics[width=0.46\linewidth]{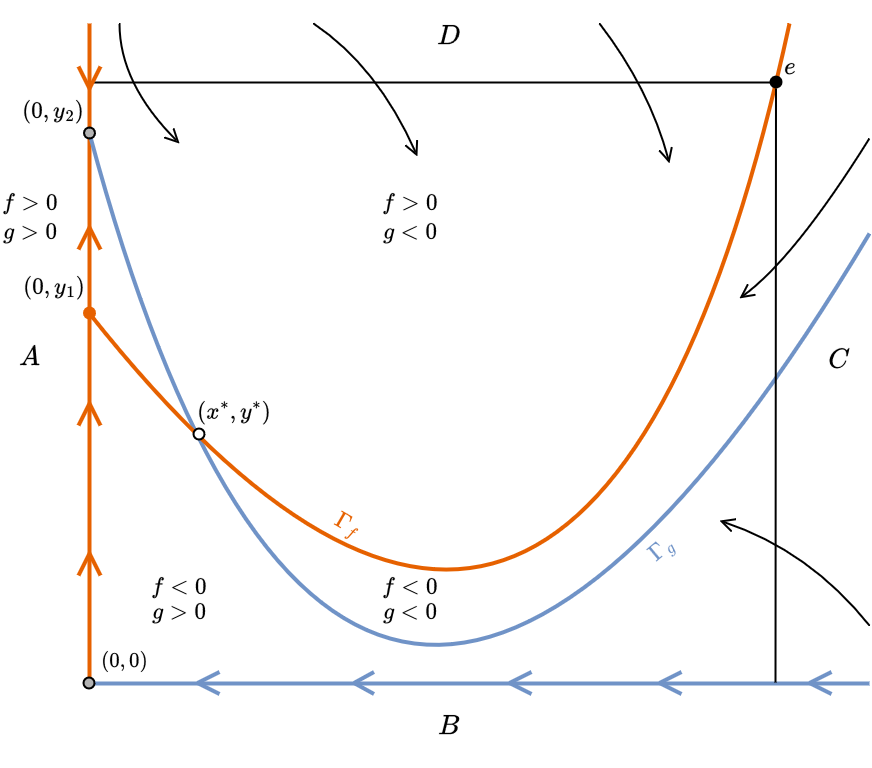}
    \caption{\textit{Phase portrait assuming partial mutualism leading to a cyclic behaviour (illustration of 2.9 of the rectangular region $ABCD$)).}
    By assumption 2.8, $(x^*, y^*)$ is a repulsive point (white point), which implies that $(0,0)$ and $(0,y_2)$ are saddle points (grey points).
    The curve $\Gamma_g$ is shown in blue, and $\Gamma_f$ in orange.}
    \label{fig:cycle_lim_2}
    \end{figure}
\end{enumerate}

\begin{remark}
    These assumptions 1.1-1.10 and 2.1-2.9 are more restrictive than those in Section \ref{subsec:General_Assumptions} and imply assumptions \ref{hyp:one} to \ref{hyp:four}.
\end{remark}

Under conditions 1.1-1.10 or 2.1-2.9, we have the following result, which will be proven in Section \ref{sec:Proofs}.

\begin{prop}\label{thm:cycle_lim_1}
Let a dynamical system in $\R_+ \times \R_+$ be described by \eqref{eq:init}, with functions $f$ and $g$ satisfying conditions 1.1-1.10 or 2.1-2.9. 
Then a limit cycle exists inside the positive quadrant.
\end{prop}

\begin{ex}
\label{ex:ex_cycle}
Let us consider the following system:  

\begin{equation}
\left\{
    \begin{array}{ll}
        \dot{x} &= x f\left(x,y\right) = x\left(a_1-b_1\left(y - c_1\right)^{2} - d_1 x\right), \\
        \dot{y} &= y g\left(x,y\right) = y\left(-a_2-b_2\left(y - c_2\right)^{2} + d_2 x\right),
    \end{array}
\right.
\label{eq:ex_cycle}
\end{equation}
with $(a_i, b_i, c_i, d_i, i \in \{1,2\})$ being positive constants.
For a certain range of parameters, the functions $f$ and $g$ will satisfy assumptions 1.1 to 1.10. 
However, according to assumptions 1.9 and 1.10, $\Gamma_f$ and $\Gamma_g$ must intersect while both being decreasing in order to obtain a repulsive equilibrium point.  
Lowering the threshold $c_2$, increasing the threshold $c_1$, or decreasing the interaction strength $b_1$ leads to a shift in the equilibrium location, making it attractive instead.
System \eqref{eq:ex_cycle} is illustrated in Figure \ref{fig:cycle_lim_illus}.

\begin{figure}[h!]
    \centering
    \includegraphics[width=0.7\linewidth]{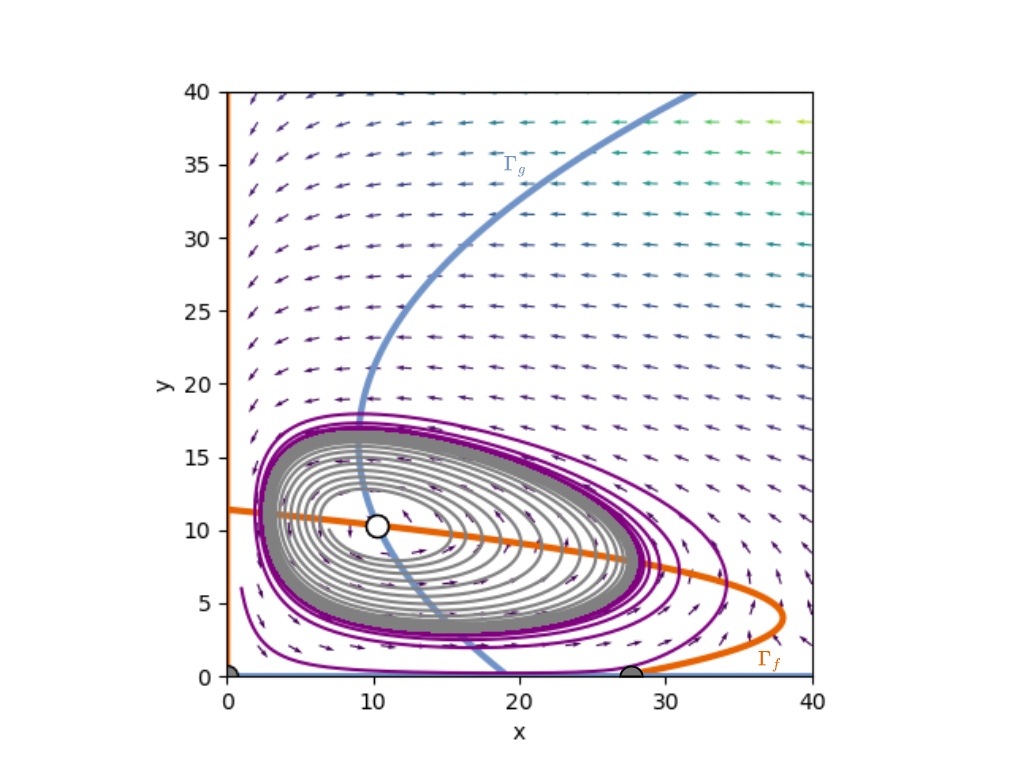}
    \caption{\textit{Phase portrait leading to a limit cycle.}
    Simulation of system \eqref{eq:ex_cycle}  with parameters: $a_1=13.69$, $b_1=0.25$, $c_1=4$, $d_1=0.36$ and $a_2=9$, $b_2=0.04$, $c_2=16$, $d_2=1$.
    The curve $\Gamma_g$ is shown in blue, and $\Gamma_f$ in orange.
    The white point represents a repulsive point and gray points represent saddle points.
    The purple trajectory starts at a population density of 2 for species $x$ and 6 for species $y$. 
    It converges toward the limit cycle from the outside, moving away from the extinction states, either of both species or the extinction of $y$.
    The gray trajectory starts at densities of 7 for species $x$ and 10 for species $y$, and converges toward the limit cycle from the inside, moving away from the interior equilibrium point.}
    \label{fig:cycle_lim_illus}
    \end{figure}
\end{ex}

In both models presented above, Assumption 1.2 (in the first configuration) and Assumption 2.2 (in the second configuration) impose a change of sign in the intraspecific interaction term. 
More precisely, for at least one species, the partial derivative of its per capita growth rate with respect to its own density satisfies in the first configuration $\frac{\partial f}{\partial x} > 0$ at low density and $\frac{\partial f}{\partial x} < 0$ at high density, and analogously for $g$ in the second configuration.

Biologically, this corresponds to the presence of an Allee effect: the species experiences positive density dependence at low population size, whereas intraspecific competition dominates at high density. 
This sign change induces a non-monotonic self-regulation mechanism. 
Our analysis suggests that such a mechanism is structurally linked to the existence of a limit cycle. 
In particular, in the absence of any sign change in the intraspecific terms for both species, oscillations cannot occur.

\begin{prop}\label{thm:cycle_lim_2}
Let a dynamical system on $\R_+ \times \R_+$ be described by \eqref{eq:init}. 
Assume that the functions $\frac{\partial f}{\partial x}$ and $\frac{\partial g}{\partial y}$ do not change sign in the positive quadrant. 
Then the system does not admit any limit cycle inside the positive quadrant.
\end{prop}

In particular, a necessary condition for the existence of a limit cycle is that at least one species exhibits a sign change in its intraspecific effect. 
Biologically, this may correspond to an Allee-type positive density dependence at low density. 
The result, proved in Section \ref{sec:Proofs}, follows from a variant of the Dulac--Bendixson criterion (see \cite{perko2013differential}).

\subsection{From Parasitism to Mutualism: Parameter-Driven Transitions in Interaction Dynamics}

In this section, we illustrate how our framework can be applied to models that cannot be captured by a definition of strict mutualism. 
We propose two models that, while not always strictly mutualistic, satisfy Definition \ref{def:mut}, and fit to our general framework.
Depending on population densities and on the location of the equilibrium in the phase portrait, such models may represent parasitic, competitive, or mutualistic interactions in different regions.
In particular, Definition \ref{def:mut} allows for systems in which the nature of species interactions may change continuously as parameters vary, while still remaining within the same mathematical structure.
Importantly, the general results established in Theorem \ref{thm:equilibrium} apply independently of transitions between mutualism and parasitism, as the theorem is formulated under assumptions that are broader than those imposed in Section \ref{subsec:Key_Assumptions_Def}.
As a consequence, a model within this framework may describe qualitatively different ecological interactions solely through continuous changes in parameter values—such as the strength or sign of intraspecific competition—without leaving the scope of Theorem \ref{thm:equilibrium}.

First, we examine a model which, depending on the signs of its coefficients—and thus its partial derivatives—can represent either a parasitic or a mutualistic interaction, as discussed in Section \ref{subsec:Key_Assumptions_Def}.  
We analyze the consequences of this change on phase portraits and equilibrium points of index $+1$.  
We then consider a second model that falls within Definition \ref{def:mut}, where we keep the coefficient signs unchanged but slightly modify the values of certain parameters.
We observe that an equilibrium previously identified as attractive in a parasitic region remains attractive but now in a mutualistic region for both species.
The following examples are consistent with assumptions \ref{hyp:one} to \ref{hyp:four}. 

\paragraph{}Our first example is the following system:  

\begin{equation}
\label{eq:Para_Mut}
    \begin{cases}
     \dot{x}= \delta x \left(a_1- b_1 \left(x-c_1 \right)^2- d_1 y \right), \\
     \dot{y}= \delta y \left(a_2-b_2 \left(y-c_2 \right)^4- d_2 x \right),
\end{cases}
\end{equation}
where:  
\begin{equation*}
\delta =
\begin{cases}
1 & \text{for parasitism} \\
-1 & \text{for mutualism}.
\end{cases}
\end{equation*}

Depending on the sign of $\delta$, the same model can represent either a parasitic interaction, where $\frac{\partial g}{\partial x}<0$ and $\frac{\partial f}{\partial y}<0$, or a mutualistic interaction, where both $\frac{\partial g}{\partial x}>0$ and $\frac{\partial f}{\partial y}>0$.  

Figure \ref{fig:Ex_Para_Mut} presents two phase portraits of this model, illustrating the effect of changing the sign of $\delta$.  
This modification transforms the system from a parasitic interaction—where a stable equilibrium allows both species to coexist, two additional equilibria enable either species to outcompete the other and persist alone, and extinction at $(0,0)$ is an attractor—to a mutualistic interaction.  
In the mutualistic case, a stable coexistence equilibrium remains, but at lower population densities. 
Extinction becomes repulsive, and if initial densities are sufficiently high, population sizes can grow explosively.  

While the alternation predicted by the theorem remains unchanged, the nature of equilibrium points of index $+1$ shifts: previously attractive points become repulsive, and vice versa.

\begin{figure}[h]
    \centering
    \begin{subfigure}[b]{0.45\textwidth}
        \centering
        \includegraphics[width=\textwidth]{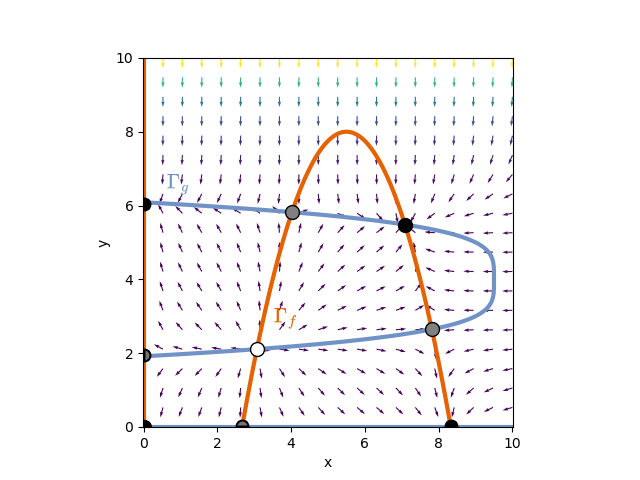}
        \caption{Phase portrait of parasitism of \eqref{eq:Para_Mut}, $\delta=1$.}
        \label{fig:Ex_Parasitism}
    \end{subfigure}
    \hfill
    \begin{subfigure}[b]{0.45\textwidth}
        \centering
        \includegraphics[width=\textwidth]{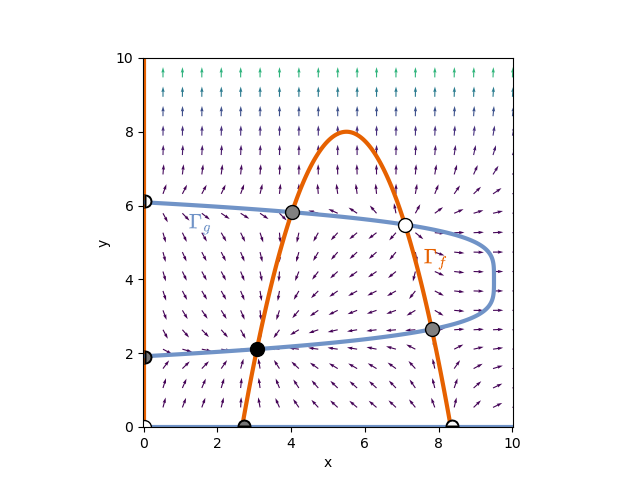}
        \caption{Phase portrait of mutualism of \eqref{eq:Para_Mut}, $\delta=-1$.}
        \label{fig:Ex_mutualism}
    \end{subfigure}
    \caption{\textit{Phase portraits of system \eqref{eq:Para_Mut} illustrating how changes in parameter signs alter the interaction type.} 
    The curve $\Gamma_g$ is shown in blue, and $\Gamma_f$ in orange.
    White points represent repulsive points, black points represent attractive points and gray points represent saddle points.
    With: $a_1=8$, $b_1=1$, $c_1=5.5$, $d_1=1$ and $a_2=9.5$, $b_2=0.5$, $c_2=4$, $d_2=1$. }
    \label{fig:Ex_Para_Mut}
\end{figure}

In our second example, we do not change the sign of the partial derivatives; the model has two regions, a mutualistic region and a parasitic region. We maintain the nature of the interaction but slightly modify the values of certain constants, which shifts the positions of the equilibrium points.
We use the following model:  

\begin{equation}
\label{eq:Para_Mut2.0}
    \begin{cases}
    \dot{x} = x(-a_1 - b_1 (x - c_1)^2 + d_1 y), \\
    \dot{y} = y (a_2 - b_2 (x - c_2)^4 - d_2 y).
\end{cases}
\end{equation}

Figure \ref{fig:Ex_Para_Mut_2.0} provides two phase portraits of the same model, demonstrating how varying parameter values affect equilibrium locations.  
The growth region of $\Gamma_g$ corresponds to $\frac{\partial g}{\partial x} > 0$, while its decline region corresponds to $\frac{\partial g}{\partial x} < 0$, with $\frac{\partial g}{\partial y} < 0$ always.  
A symmetric structure applies to $f$.  

Thus, in Figure \ref{fig:Ex_Parasitism_2.0}, the stable equilibrium lies in a region where the interaction is parasitic for species $y$ and beneficial for species $x$.  
In contrast, in Figure \ref{fig:Ex_mutualism_2.0}, stable coexistence occurs within a region of strict mutualism.

\begin{figure}[h]
    \centering
    \begin{subfigure}[b]{0.45\textwidth}
        \centering
        \includegraphics[width=\textwidth]{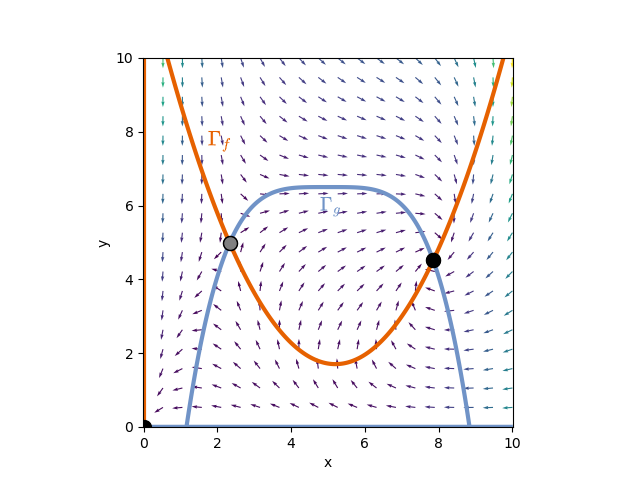}
        \caption{Phase portrait of \eqref{eq:Para_Mut2.0}, with parameters: $a_1=1.7$, $c_1=5.2$ and $c_2=5$.}
        \label{fig:Ex_Parasitism_2.0}
    \end{subfigure}
    \hfill
    \begin{subfigure}[b]{0.45\textwidth}
        \centering
        \includegraphics[width=\textwidth]{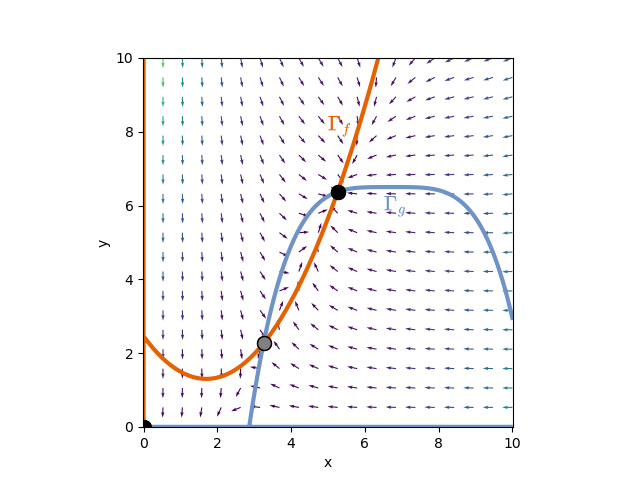}
        \caption{Phase portrait of \eqref{eq:Para_Mut2.0}, with parameters: $a_1=1.5$, $c_1=1.9$ and $c_2=7$.}
        \label{fig:Ex_mutualism_2.0}
    \end{subfigure}
    \caption{\textit{Phase portraits illustrating how changing parameter values—while maintaining the interaction type—modifies the equilibrium locations.}
    The curve $\Gamma_g$ is shown in blue, and $\Gamma_f$ in orange.
    Black points represent attractive points and gray points represent saddle points.
    System \eqref{eq:Para_Mut2.0} with: $b_1=0.4$, $d_1=1$, $a_2=6.5$, $b_2=0.03$ and $d_2=1$.}
    \label{fig:Ex_Para_Mut_2.0}
\end{figure}
\section{Discussion}
\label{sec:Discussion}

In this study we investigated the deterministic dynamics of two interacting species under density-dependent interactions, with a particular focus on mutualistic interactions. 
Our approach allowed to draw a general theoretical framework, formalized by Theorem \ref{thm:equilibrium}, which is independent of the specific type of ecological interaction considered, covering for instance predator-prey, competitive and a wide range of mutualistic interactions, as highlighted in Tables \ref{tab:linear}, \ref{tab:saturing} and \ref{tab:non-mono}. 
However, while most mutualistic models reviewed in \cite{hale2021ecological} satisfy assumptions \ref{hyp:one} to \ref{hyp:four}, some mutualistic models do not necessarily adhere to these assumptions: for instance, they may involve more than two isoclines intersecting at a single point, or include emigration \citep{thompson2006dynamics, holland2010}. 

Importantly, we extended the classical definition of mutualism, allowing density-dependent transitions between mutualistic and antagonistic interactions. 
Such transition in the type of ecological interactions are well documented at the evolutionary scale (see for instance \cite{drew2021microbial}), and could be favored in ecological context where relative densities among interacting species might change. 
We thus proposed a broader Definition \ref{def:mut}: a system is considered mutualistic if there exists at least one region for each species in the phase portrait where each benefits from the presence of the other.
This relaxed definition enables the exploration of more diverse ecological dynamics and highlights biologically realistic features of species interactions, that can crucially depend on the relative densities of interacting species (see for instance density-dependent mutualistic effects in ant-aphid interactions, \cite{breton1992density}).
Even in the textbook examples of mutualistic interactions, such as pollination of flowering plants by insects, large densities of invasive bees have been documented to decrease plant reproductive success because of the damages caused by over-visitation of flowers \citep{aizen2014mutualism}.  
As shown in some of the mutualistic models we studied, it is possible that small changes in parameter values could cause an equilibrium point that was initially located in a strictly mutualistic region to shift into a parasitic one.
By allowing such transitions between mutualism and parasitism, our extended framework sheds new light on the qualitative structure of species interactions and permits the emergence of dynamical phenomena not accessible under strictly mutualistic assumptions, including the theoretical possibility of limit cycles.

Numerous empirical studies report cyclic population dynamics in nature, which are overwhelmingly interpreted as the outcome of predator–prey or parasitic interactions \citep[p. 383-384]{turchin2003}. 
We showed that cycles can indeed not occur in case of strict mutualism. 
However, we also showed that periodic oscillations can occur when there are transitions between mutualistic and parasitic interaction regimes. 
To our knowledge, three contexts can give rise to periodic oscillations in a mutualism model: the two general contexts we give in Section \ref{subsec:Extended_Mutualism} as well as the model analyzed by \cite{song2025}. 
Note that \cite{song2025}'s model is a particular case of System \ref{eq:init} but it does not enter into the general context of Section \ref{subsec:Extended_Mutualism}, thus suggesting that periodic oscillations can occur in other situations. 
It is striking however that our examples and \cite{song2025}'s model share a common feature: an Allee effect, \textit{i.e.} a positive intraspecific feedback interactions when density is low. 
This observation raises the question of whether, beyond mutualism–parasitism transitions, an Allee effect may also be a necessary condition for the emergence of such cycles.
Overall, our results propose an alternative and complementary interpretative framework for explaining some of the cyclic patterns observed in natural systems.

To our knowledge there is no empirical evidence of mutualistic interactions exhibiting sustained population cycles, but long-term survey of mutualistic populations with precise estimations of densities are actually scarce.
Given the current lack of precise empirical data on the temporal dynamics of mutualistic populations, our results should primarily be interpreted as theoretical possibilities, which might nonetheless become relevant under particular ecological or evolutionary scenarios yet to be documented. 
Our framework thus highlights potential behaviors that could guide future empirical investigations aimed at identifying conditions under which mutualistic interactions may fluctuate rather than converge to equilibrium. 

Finally, our framework provides general predictions of the dynamics of species involved in a wide array of ecological interactions, and highlights the key-role of density-dependent processes, that are  widespread in natural communities \citep{kawatsu2018density}.
In particular, the framework shows how density dependence can shape interaction outcomes across a continuum of ecological contexts.
It does not rely on specific algebraic expressions for interaction terms but only on qualitative assumptions on the signs of partial derivatives, allowing it to encompass a broad family of models without requiring explicit functional forms.
This flexibility opens the way to future extensions incorporating environmental variability, state-dependent interactions, and slow evolutionary changes in parameters.
\section{Proofs}
\label{sec:Proofs}

In this section, we provide the proofs of the different results stated in Sections \ref{sec:General_Behavior} and \ref{sec:Mutualistic}. 
We begin with the proof of Theorem \ref{thm:equilibrium} and Proposition \ref{prop:equilibrium_axes}, which describe the alternation of equilibrium point indices along the isoclines.

\begin{proof}[Proof of Theorem \ref{thm:equilibrium}]
Let us consider the dynamical system \eqref{eq:init}, where $f$ and $g$ satisfy hypotheses \ref{hyp:one} to \ref{hyp:four}.  

From Assumption \ref{hyp:one}, the strictly positive quadrant contains exactly two continuous curves $\Gamma_f$ and $\Gamma_g$, which may intersect. 
These curves partition $\R_+ \times \R_+$ into distinct regions.  
According to \ref{hyp:two}, inside each zone, the signs of $f$ and $g$, therefore those of $\dot{x}$ and $\dot{y}$ (for $x>0$ and $y>0$), remain constant. 
Thus, each region in  $\R_+ \times \R_+$ can be associated with a unique pair of signs $\left(\dot{x}, \dot{y}\right)$.
There are four possible pairs of signs: $\left(+,+\right)$, $\left(-,-\right)$, $\left(+,-\right)$ and $\left(-,+\right)$ for $\left(\dot{x}>0, \dot{y}>0\right)$, $\left(\dot{x}<0, \dot{y}<0\right)$, $\left(\dot{x}>0, \dot{y}<0\right)$ and $\left(\dot{x}<0, \dot{y}>0\right)$.

By \ref{hyp:three}, equilibrium points correspond to the intersections of $\Gamma_f$ and $\Gamma_g$. 
Since there are no more than two curves intersecting at any given point, the plane near an equilibrium point is divided into four regions.
To move from one zone to another, crossing a $\Gamma_f$ curve causes a change in the sign of $\dot{x}$, crossing a $\Gamma_g$ curve causes a change in the sign of $\dot{y}$.

The assumption that the sign of either $\dot{x}$ or $\dot{y}$ changes on either side of an isocline (\ref{hyp:two}) ensures that at least one component of the pair $\left(\dot{x}, \dot{y}\right)$ changes sign.
By Assumption \ref{hyp:four}, since the isoclines do not coincide, both components of $\left(\dot{x}, \dot{y}\right)$ cannot change sign at the same time.
From these properties, near an equilibrium point, there are exactly two possible clockwise sequences of sign changes:  

\begin{itemize}
    \item[(i)] $\left(+,+\right)$, $\left(+,-\right)$, $\left(-,-\right)$ and $\left(-,+\right)$,
    \item[(ii)] $\left(+,+\right)$, $\left(-,+\right)$, $\left(-,-\right)$ and $\left(+,-\right)$.
\end{itemize} 

Each pair corresponds to a specific direction of the vector field.
We can thus associate four phase portraits with each sign alternation, representing the rotation of sign pairs near the equilibrium point in Figure \ref{fig:PhasePortrait_Index}. 

Using this information, we can determine the index of the vector field around the equilibrium  point.
Given the form of the vector field in Figure \ref{fig:PhasePortrait_Index-1}, an equilibrium point with the sign alternation pattern (ii) corresponds to an index of $-1$, indicating a saddle point.
Similarly, according to the form of the vector field in Figure \ref{fig:PhasePortrait_Index+1},  an equilibrium point with the sign alternation pattern (i) corresponds to an index of $+1$, which can represent an attractive point, a repulsive point, or even a center.

Finally, \ref{hyp:one} ensures that only two isoclines, $\Gamma_f$ and $\Gamma_g$, exist in the strictly positive quadrant, making equilibrium points their intersections. 
\ref{hyp:four} requires that the curves alternate at each intersection due to their crossing.
Consequently, equilibrium points with an index of $+1$ are always followed by points with an index of $-1$ along the isoclines, and \textit{vice versa}.
\end{proof}

\begin{proof}[Proof of Proposition \ref{prop:equilibrium_axes}]
We restrict the analysis to the case where the equilibrium point lies on the $x$-axis \textit{i.e.} is generated by an intersection of the isocline $\Gamma_f$ with the axis $y = 0$.
The case of an equilibrium point located on the $y$-axis, arising from an intersection of $\Gamma_g$ with the axis $x = 0$, follows by symmetry.

Let $(x_1,0)$ denote the equilibrium point closest to the origin $(0,0)$.
Under our assumptions, there may exist at most one additional equilibrium point on the $x$-axis, denoted by $(x_2,0)$, with $x_1 < x_2$, which may or may not exist.
We keep the same notations and conventions as in the proof of Theorem \ref{thm:equilibrium}.

In a neighborhood of the equilibrium point $(x_1,0)$, four local configurations of the vector field may occur, depending on the associated sign of the vector field across the adjacent regions.
The equilibrium point $(x_1,0)$ may have index $+1$, corresponding to an attractive or repulsive equilibrium.
In this case, the alternation of signs across the four regions is either $(+,-),(-,-)$ or $(-,+),(+,+)$ illustrated in Figure \ref{fig:demo_prop}(\subref{fig:axe_Index+1}).
Alternatively, $(x_1,0)$ may have index $-1$, in which case it is a saddle point, and the alternation of signs is either $(+,+),(-,+)$ or $(-,-),(+,-)$ illustrated in Figure \ref{fig:demo_prop}(\subref{fig:axe_Index-1}).
As in the interior case, the ordering of signs is read clockwise.

\begin{figure}[h]
    \centering
    \begin{subfigure}[b]{0.45\textwidth}
        \centering
        \includegraphics[width=\textwidth]{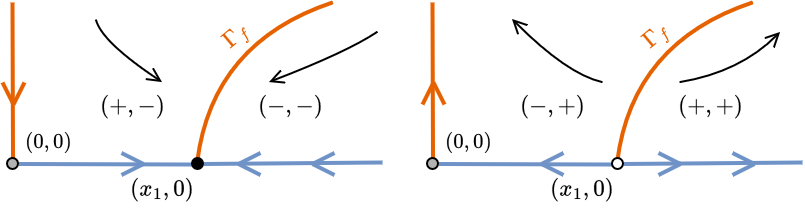}
        \caption{Local possible structure of the phase portrait around an index $+1$ equilibrium on the $x$-axis.}
        \label{fig:axe_Index+1}
    \end{subfigure}
    \hfill
    \begin{subfigure}[b]{0.45\textwidth}
        \centering
        \includegraphics[width=\textwidth]{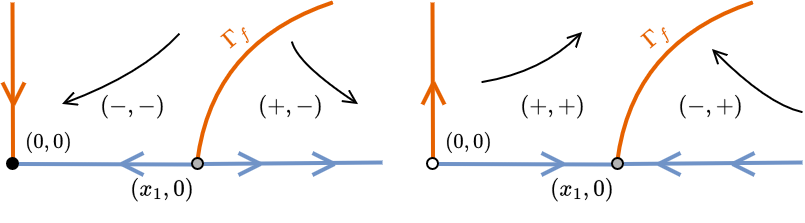}
        \caption{Local possible structure of the phase portrait around an index $-1$ equilibrium on the $x$-axis.}
        \label{fig:axe_Index-1}
    \end{subfigure}
    \caption{\textit{Four possible local configurations of the phase portrait near an equilibrium point located on the $x$-axis.}, Two possible configurations are shown for an equilibrium point with index $+1$ (\ref{fig:axe_Index+1}) and index $-1$ (\ref{fig:axe_Index-1}). The alternation of signs of the vector field across adjacent regions is read clockwise. These configurations correspond to equilibrium points generated by intersections of the isocline $\Gamma_f$ with the axis $x = 0$.}
    \label{fig:demo_prop}
\end{figure}

Thus, when following the isocline $\Gamma_f$ and encountering an equilibrium point located in the strictly positive quadrant, the first intersection of $\Gamma_f$ with $\Gamma_g$ reached when moving along $\Gamma_f$ starting from $(x_1,0)$, the alternation of signs is reversed, while still being read clockwise.
More precisely, if the equilibrium point $(x_1,0)$ on the $x$-axis is characterized by the alternation $(+,-),(-,-)$ in Figure \ref{fig:axe_Index+1}, then the first equilibrium point encountered in the strictly positive quadrant exhibits the alternation $(+,+)$, $(-,+)$, $(-,-)$, and $(+,-)$, corresponding to case $(ii)$ in the proof of Theorem \ref{thm:equilibrium}, which yields an equilibrium point with index $-1$.
The remaining cases follow the same principle.
The remaining alternations in the strictly positive quadrant then follow from Theorem \ref{thm:equilibrium}.

It remains to determine the nature of the equilibrium point $(x_2,0)$, when it exists.
At this stage, we have shown that the index alternation persists when the equilibrium point $(x_1,0)$ on the $x$-axis is taken into account.
It remains to determine whether this alternation is also satisfied by a second equilibrium point on the axis, if such a point exists.
This depends on whether the isocline $\Gamma_g$ intersects the $x$-axis, and on the number of such intersections.
Indeed, even if intersections of $\Gamma_g$ with the $x$-axis do not generate additional equilibrium points, they modify the sign of the vector field in the vertical direction (for example changing a $(+,+)$ into a $(+,-)$).

If $\Gamma_g$ does not intersect the $x$-axis between $(x_1,0)$ and $(x_2,0)$, or if $\Gamma_g$ intersects it twice on this interval (resulting in a double sign change), then the alternation  at $(x_2,0)$ is the reverse of that at $(x_1,0)$.
For instance, if the alternation at $(x_1,0)$ is $(+,+)$ and $(-,+)$ in Figure \ref{fig:axe_Index-1}, then at $(x_2,0)$ it becomes $(-,+)$ and $(+,+)$, which transforms an equilibrium point with index $-1$ into one with index $+1$.
It remains to verify whether this behavior is consistent with the previously established index alternation.
In both situations (zero or two intersections of $\Gamma_g$ with $x$-axis), $\Gamma_g$ may generate equilibrium points in the phase portrait through its intersections with $\Gamma_f$.
Since the isoclines are assumed to be continuous and by Theorem \ref{thm:equilibrium}, the number of equilibrium points created by intersections of $\Gamma_f$ and $\Gamma_g$ before $\Gamma_f$ reaches $(x_2,0)$ is necessarily even.
Consequently, the last equilibrium point encountered in the strictly positive quadrant, corresponding to the final intersection of $\Gamma_f$ and $\Gamma_g$, has the same index as $(x_1,0)$.
As shown above, the equilibrium point $(x_2,0)$ has an index opposite to that of $(x_1,0)$.
Therefore, the alternation of equilibrium point indices is preserved when $\Gamma_g$ has zero or two intersections with the $x$-axis between $(x_1,0)$ and $(x_2,0)$.

It remains to consider the case where the isocline $\Gamma_g$ intersects the $x$-axis exactly once between $(x_1,0)$ and $(x_2,0)$.
By the same reasoning as above, for an equilibrium point $(x_1,0)$ characterized by a given alternation of sign pairs, the intersection of $\Gamma_g$ with the $x$-axis induces a change in the sign of the vector field in the $y$-direction along the axis.
Then, when reaching $(x_2,0)$, the intersection with $\Gamma_f$ induces a change in the sign of the vector field in the $x$-direction.
For instance, if the alternation around $(x_1,0)$ is given by $(+,+)$ and $(-,+)$, then after crossing $\Gamma_g$ the alternation becomes $(-,-)$, and after crossing $\Gamma_f$ it becomes $(+,-)$ at $(x_2,0)$.
As a result, the equilibrium point $(x_2,0)$ has the same index as $(x_1,0)$.
Moreover, since $\Gamma_g$ is continuous, it intersects the isocline $\Gamma_f$ at least once between $(x_1,0)$ and $(x_2,0)$, and in fact an odd number of times.
Consequently, the last equilibrium point generated by an intersection of $\Gamma_f$ and $\Gamma_g$ has an index opposite to that of $(x_1,0)$, and hence opposite to that of $(x_2,0)$.
Therefore, the alternation of equilibrium point indices is preserved when following $\Gamma_f$ from $(x_1,0)$ to $(x_2,0)$.

\end{proof}

We now provide the proof of Corollary \ref{cor:attrac}, which is a consequence of Theorem \ref{thm:equilibrium}. 
The proof of Corollary \ref{cor:repul} is very similar and will be omitted.

\begin{proof}[Proof of Corollary \ref{cor:attrac}]
Let $\left(x, y\right)$ be an equilibrium point in the strictly positive quadrant.
We can calculate the Jacobian of the system \eqref{eq:init} at the point $\left(x, y\right)$:

\begin{equation}
J\left(x,y\right) = 
\begin{pmatrix}
x \frac{\partial f\left(x,y\right)}{\partial x} & x \frac{\partial f\left(x,y\right)}{\partial y} \\
y \frac{\partial g\left(x,y\right)}{\partial x} & y \frac{\partial g\left(x,y\right)}{\partial y}
\end{pmatrix}.
\end{equation}

And the trace of $J\left(x,y\right)$ is:

\begin{equation}
\text{Tr}\left(J\left(x,y\right)\right) = x \frac{\partial f\left(x,y\right)}{\partial x}+ y \frac{\partial g\left(x,y\right)}{\partial y}<0.
\end{equation}

It implies that the point $\left(x,y\right)$ is an attractive node if $\text{det}\left(J\left(x,y\right)\right)>0$, else, if $\text{det}\left(J\left(x,y\right)\right)<0$, the point $\left(x,y\right)$ is a saddle point.
\end{proof}

We now prove Corollary \ref{cor:attrac_axes}. 
Notice that the proof of Corollary \ref{cor:repul_axes} will be omitted since it follows exactly the same arguments.

\begin{proof}
We suppose that $x_1>0$ and $(x_1,0)$ is an equilibrium point on the $x$-axis, that is, $f(x_1,0)=0$.
We prove the result for equilibria on the $x$-axis, \textit{i.e.} for equilibrium points given by the intersection of $\Gamma_f$ with the line $y=0$.
The proof for equilibria on the $y$-axis, corresponding to the intersection of $\Gamma_g$ with the line $x=0$, is identical.
We recall that $f\in \cun$ and that we assume $\frac{\partial f}{\partial x}(x,y)<0 \quad \text{for all } (x,y)$.

We assume by contradiction that there exist two distinct equilibria $0<x_1<x_2$ on the $x$-axis such that $f(x_1,0)=f(x_2,0)=0$.
Since $f$ is $C^{1}$, for $h>0$ a first-order Taylor expansion of $f(\cdot,0)$ at $x_1$ yields

\begin{equation*}
    f(x_1+h,0) = f(x_1,0)+h\frac{\partial f}{\partial x}(x_1,0)+\petito{h} = h\frac{\partial f}{\partial x}(x_1,0)+\petito{h},
\end{equation*}
and
\begin{equation*}
    f(x_1-h,0) = -h\frac{\partial f}{\partial x}(x_1,0)+\petito{h}.
\end{equation*}

Since $\frac{\partial f}{\partial x}(x_1,0)<0$, we have $f(x,0)<0$ for $x>x_1$ sufficiently close to $x_1$, and $f(x,0)>0$ for $x<x_1$ sufficiently close to $x_1$.
Similarly, a Taylor expansion at $x_2$ implies that $f(x,0)>0$ for $x<x_2$ sufficiently close to $x_2$.
By continuity of $f(\cdot,0)$, there must then exist $x\in(x_1,x_2)$ such that $f(x,0)=0$, which contradicts
hypotheses \ref{hyp:one} to \ref{hyp:four}, which allow at most two equilibrium points on the axes.
Hence, the equilibrium on the $x$-axis is unique.

Let $x_1$ be the unique equilibrium on the $x$-axis.
From the previous sign analysis, $f(x,0)>0$ for $x<x_1$ close to $x_1$, and $f(x,0)<0$ for $x>x_1$ close to $x_1$.
By hypothesis, the index of $(x_1,0)$ is $+1$, and therefore, by Proposition \ref{prop:equilibrium_axes}, $(x_1,0)$ is an attractive equilibrium point.
\end{proof}

We now prove why stable limit cycles cannot be observed under assumptions of strict mutualism.

\begin{proof}[Proof of Theorem \ref{thm:no_cycle_lim_1}]
$\left(x^*, y^* \right)$ is a repulsive equilibrium point with the assumptions $\frac{\partial f}{\partial y} (x^*, y^*) > 0$ and $\frac{\partial g}{\partial x} (x^*, y^*) > 0$. 
According to Table \ref{tab:eq_iso} (see Appendix \ref{appendix:IsoTable}), this case only arises when, in addition, $\frac{\partial f}{\partial x} (x^*, y^*) > 0$ and $\frac{\partial g}{\partial y} (x^*, y^*) > 0$. 
Since the functions $f$ and $g$ are $\cun$, there exists a neighborhood $R_1$ around the point $(x^*, y^*)$ such that $\frac{\partial f}{\partial y} > 0$, $\frac{\partial g}{\partial x} > 0$, $\frac{\partial f}{\partial x} > 0$ and $\frac{\partial g}{\partial y}> 0$.

By the implicit function theorem, we conclude  that, in the neighborhood of the point $(x^*, y^*)$, both isoclines $\Gamma_f$ and $\Gamma_g$ are decreasing.
To correspond to a repulsive equilibrium point (with index $+1$), the sign alternation around this point must match that of Figure \ref{fig:PhasePortrait_Index+1}.
There are only two possible phase portrait configurations in the neighborhood see Figure \ref{fig:No_Cycle_Limit_1}. 

\begin{figure}[h]
    \centering
    \begin{subfigure}[b]{0.43\textwidth}
        \centering
        \includegraphics[width=0.8\textwidth]{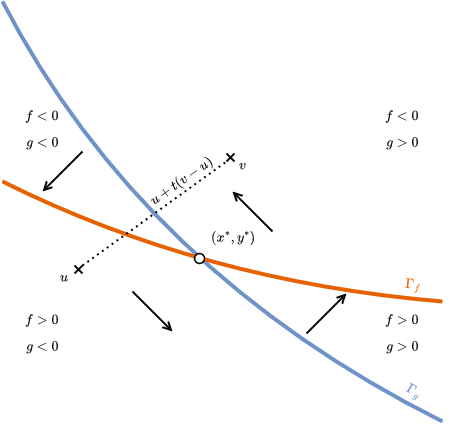}
        \caption{First possible alternation}
        \label{fig:No_Cycle_Limit_1_1}
    \end{subfigure}
    \hfill
    \begin{subfigure}[b]{0.43\textwidth}
        \centering
        \includegraphics[width=0.8\textwidth]{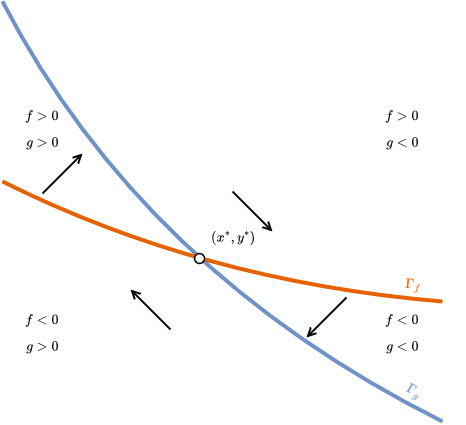}
        \caption{Second possible alternation}
        \label{fig:No_Cycle_Limit_1_2}
    \end{subfigure}
    \caption{\textit{Illustration of Proof of Theorem \ref{thm:no_cycle_lim_1}.} Two possible neighborhoods of the repulsive equilibrium point $(x^*, y^*)$ that could lead to the formation of a limit cycle around it.}
    \label{fig:No_Cycle_Limit_1}
\end{figure}

Let us consider the case of the Figure \ref{fig:No_Cycle_Limit_1_1}. 
The other case of the Figure \ref{fig:No_Cycle_Limit_1_2} follows similarly with $g$.  
We know the sign of $f$ and $g$ on either side of $\Gamma_f$ and $\Gamma_g$.  
Let $u$ and $v$ be points as shown in Figure \ref{fig:No_Cycle_Limit_1_1}, with respective coordinates 
$\begin{pmatrix} x_1 \\ y_1 \end{pmatrix}$ and $ \begin{pmatrix} x_2 \\ y_2 \end{pmatrix}$,
where $x_1 < x_2$ and $y_1 < y_2$, and define the function $h$ as follows:

\begin{align}
\label{eq:h}
h : [0, 1] &\to \mathbb{R} \\
t &\mapsto f(u + t(v - u))
\end{align}

Then, we have:  

\begin{align}
\label{eq:h_sign_0}
h(0) &= f(u) > 0, \\
\label{eq:h_sign_1}
h(1) &= f(v) < 0.
\end{align}

By construction, $h$ is continuously differentiable $\cun$, with derivative given by: 

\begin{equation*}
h'(t) = (x_2 - x_1) \frac{\partial f}{\partial x} (u + t(v - u)) + (y_2 - y_1) \frac{\partial f}{\partial y} (u + t(v - u)).
\end{equation*}

However, on the domain of definition of $h$, we have $\frac{\partial f}{\partial x} > 0$ and $\frac{\partial f}{\partial y} > 0$.  
Therefore, $h'(t) > 0$ for all $t \in \left[0, 1\right]$, which is impossible since $h$ is $\cun$ and by \eqref{eq:h_sign_0} and \eqref{eq:h_sign_1}.

\end{proof}

The proof of Theorem 4.2 follows the same approach as Theorem 4.1, with adjustments in the selection of vectors and the direction of the isoclines.

\begin{proof}[Proof of Theorem \ref{thm:no_cycle_lim_2}]

$\left(x^*, y^* \right)$ is an attractive equilibrium point with the assumptions $\frac{\partial f}{\partial y} (x^*, y^*) > 0$ and $\frac{\partial g}{\partial x} (x^*, y^*) > 0$. 
According to Table \ref{tab:eq_iso} (see Appendix \ref{appendix:IsoTable}), this case only arises when, in addition, $\frac{\partial f}{\partial x} (x^*, y^*) < 0$ and $\frac{\partial g}{\partial y} (x^*, y^*) < 0$. 
Since the functions $f$ and $g$ are $\cun$, there exists a neighborhood $R_2$ around the point $(x^*, y^*)$ such that $\frac{\partial f}{\partial y} > 0$, $\frac{\partial g}{\partial x} > 0$, $\frac{\partial f}{\partial x} < 0$ and $\frac{\partial g}{\partial y}< 0$.
Then both isoclines $\Gamma_f$ and $\Gamma_g$ are increasing.
To correspond to an attractive equilibrium point (with index $+1$), the sign alternation around this point must match that of Figure \ref{fig:PhasePortrait_Index+1}.
There are only two possible phase portrait configurations in the neighborhood see Figure \ref{fig:No_Cycle_Limit_2}. 

\begin{figure}[h]
    \centering
    \begin{subfigure}[b]{0.43\textwidth}
        \centering
        \includegraphics[width=0.8\textwidth]{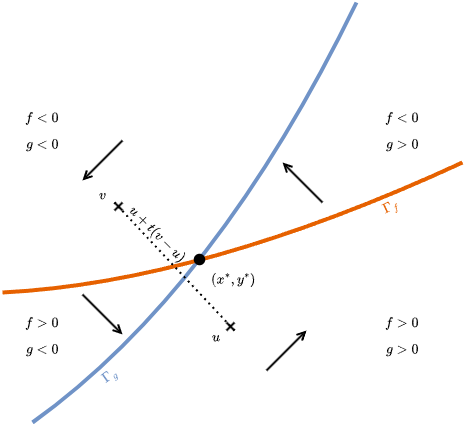}
        \caption{First possible alternation}
        \label{fig:No_Cycle_Limit_2_1}
    \end{subfigure}
    \hfill
    \begin{subfigure}[b]{0.43\textwidth}
        \centering
        \includegraphics[width=0.8\textwidth]{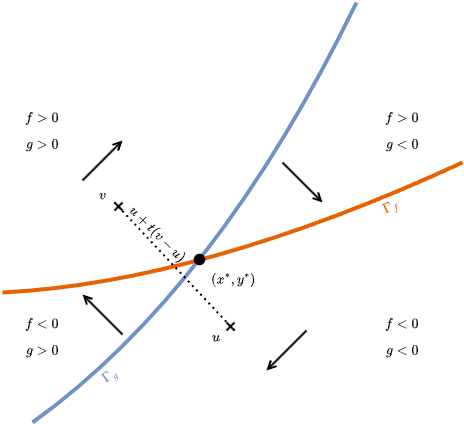}
        \caption{Second possible alternation}
        \label{fig:No_Cycle_Limit_2_2}
    \end{subfigure}
    \caption{\textit{Illustration of Proof of Theorem \ref{thm:no_cycle_lim_2}.} Two possible neighborhoods of the attractive equilibrium point $(x^*, y^*)$ that could lead to the formation of a repulsive limit cycle around it.}
    \label{fig:No_Cycle_Limit_2}
\end{figure}

Let us consider the case of the Figure \ref{fig:No_Cycle_Limit_2_1}. 
The other case of the Figure \ref{fig:No_Cycle_Limit_2_2} follows similarly with $g$.  
We know the sign of $f$ and $g$ on either side of $\Gamma_f$ and $\Gamma_g$.  
Let $u$ and $v$ be points as shown in Figure \ref{fig:No_Cycle_Limit_2_1}, with respective coordinates 
$\begin{pmatrix} x_1 \\ y_1 \end{pmatrix}$ and $ \begin{pmatrix} x_2 \\ y_2 \end{pmatrix}$,
where $x_1 > x_2$ and $y_1 < y_2$, and we define the function $h$ as in the previous proof, following \eqref{eq:h}, and obtain similar results as in \eqref{eq:h_sign_0} and \eqref{eq:h_sign_1}.  
Since $h$ is $\cun$ and has the same derivative as in the previous proof, we arrive at the same contradiction.

\end{proof}

We present the proof of Proposition \ref{thm:cycle_lim_1}, which provides conditions under which stable limit cycles may arise.

\begin{proof}[Proof of Proposition \ref{thm:cycle_lim_1}]

The proof is based on the Poincaré-Bendixson theorem \citep{ciesielski2012poincare}, used here through the fivefold way formulation introduced by \cite{coleman1983biological}.
Our strategy consists in constructing a positively invariant region of the phase plane and analyzing the possible $\omega$-limit sets of a trajectory entering this region.

For the first case, we construct a rectangular region composed of sides $A$, $B$, $C$, and $D$, as shown in Figure \ref{fig:cycle_lim_1}.
We will demonstrate that once a trajectory $\gamma$ enters this region, it cannot exit.

$A$ lies along a trajectory, defined by $x=0$, $y>0$, and $(0,0)$ is an equilibrium point. 
Thus, no trajectory can escape across $A$ as $t \to +\infty$, because to do so, it would have to intersect another trajectory or $(0,0)$, and trajectories cannot intersect.
$B$ consists of: two critical points, located at $(0,0)$ and $(x_2,0)$, the trajectory defined by $0 < x < x_2$, $y=0$, which moves away from $(0,0)$ towards the point $(x_2,0)$, and finally the trajectory defined by $x_2<x$, $y=0$, which moves towards the point $(x_2,0)$.
As before, no trajectory can exit across $B$.
Trajectories cross $C$ moving from right to left as $t$ increases, since the points along this arc are located above $\Gamma_f$. 
Given assumption 1.10 where $f<0$, this implies that $\dot{x} < 0$ along the arc.
The trajectory through the corner $e$ is horizontal and is moving to the left at that point, as $e$ lies on $\Gamma_g$, which means $\dot{y} = 0$. 
Along $D$, trajectories move downward and to the left, since the points on this arc lie above both isoclines. 
Therefore, no trajectories can escape across the boundary of the rectangular region, and trajectories enter the region through the points along the top and right edges. 

Our goal now is to demonstrate that the limit set $\omega(\gamma)$ of the trajectory $\gamma$ is a limit cycle.
According to the Poincaré-Bendixson theorem, stated in the form of the fivefold way (see \cite{coleman1983biological}, page 265), this will follow if we can show that $\omega(\gamma)$ contains no critical points.
The critical points of system \eqref{eq:init} are located at the origin $(0,0)$, at the point $(x_2, 0)$ on the $x$-axis, and at the point $\left(x^*,y^*\right)$.
$\left(x^*,y^*\right)$ cannot be in $\omega(\gamma)$ because $\left(x^*,y^*\right)$ is repulsive by assumption 1.9. 
Moreover we cannot have that $\omega(\gamma) = \{(0,0)\}$.
In fact, for the trajectory $\gamma$ to approach $(0,0)$, it must intersect the interior of the region where $f > 0$ and  $g < 0$ (as shown in Figure \ref{fig:cycle_lim_1}), and stay in this region as $t \to +\infty$.
However, if $\gamma$ is in the interior of this region, then $f > 0$ and $\dot{x}> 0$, which implies that we cannot have both $x(t)$, $y(t)$ remaining in the region for all large enough $t$ and $(x(t), y(t)) \rightarrow (0,0)$ as $t \rightarrow +\infty$. 
A similar argument shows that $\omega(\gamma)\neq \{(x_2,0)\}$.

Thus the last case is if $\omega(\gamma)$ contains any critical points at all, $\omega(\gamma)$ can be a cycle graph. 
$\left(x^*,y^*\right)$ cannot belong to a cycle graph since $\left(x^*,y^*\right)$ is not the $\omega$-limit set of any nonconstant trajectory.  
If $(x_2,0)$ belongs to a cycle graph, then so must the orbit along $B$ with $\omega$-limit set $\{(x_2,0)\}$ and $\alpha$-limit set $\{(0,0)\}$, since that is the only orbit in the rectangular region with $\omega$-limit set $\{(x_2,0)\}$.
But in a cycle graph, each limit set of each orbit is a critical point of the cycle graph. 
We must then have that $\{(0,0)\}$ in the cycle graph, then the unbounded orbit on $A$ also belongs to the cycle graph since it is the only orbit in the positive quadrant whose $\omega$-limit set is $\{(0,0)\}$. 
This is impossible since $\omega(\gamma)$
is bounded because $\gamma$ is bounded. 
Thus $\omega(\gamma)$ contains no critical points.
By the fivefold way (see \cite{coleman1983biological}, page 265) we have that $\omega(\gamma)$ is a cycle.
By the Poincaré-Bendixson theorem, applied through the fivefold way (see \cite{coleman1983biological}, page 265) we have that $\omega(\gamma)$ is a cycle. \\
\newline

We will follow the same approach for the second case.
We construct a rectangular region composed of sides $A$, $B$, $C$, and $D$, as shown in Figure \ref{fig:cycle_lim_2}.
We will demonstrate that once a trajectory $\gamma$ enters this region, it cannot exit.

$A$ consists of: two critical points, located at $(0,0)$ and $(0,y_2)$, the trajectory defined by $0 < y < y_2$, $x=0$, which moves away from $(0,0)$ towards the point $(0,y_2)$, and finally the trajectory defined by $y_2<y$, $x=0$, which moves towards the point $(0,y_2)$.
Thus, no trajectory can escape across $A$ as $t \to +\infty$, because to do so, it would have to intersect another trajectory or $(0,0)$.
$B$ lies along a trajectory, defined by $x>0$, $y=0$, and $(0,0)$ is an equilibrium point. 
As before, no trajectory can exit across $B$.
Trajectories cross $C$ moving from right to left as $t$ increases, since the points along this arc are located below $\Gamma_f$. 
Given assumption 2.9 where $f<0$, this implies that $\dot{x} < 0$ along the arc.
The trajectory through the corner $e$ is vertical and is moving downwards at that point, as $e$ lies on $\Gamma_f$. 
Along $D$, trajectories move downward and to the right, since the points on this arc lie above both isoclines. 
Therefore, no trajectories can escape across the boundary of the rectangular region, and trajectories enter the region through the points along the top and right edges. 

As in the previous proof, we will demonstrate that the limit set $\omega(\gamma)$ of the trajectory $\gamma$ is a limit cycle.
It will follow if we can show that $\omega(\gamma)$ contains no critical points.
The critical points of system \eqref{eq:init} are located at the origin $(0,0)$, at the point $(0, y_2)$ on the $y$-axis, and at the point $\left(x^*,y^*\right)$.
$\left(x^*,y^*\right)$ cannot be in $\omega(\gamma)$ because $\left(x^*,y^*\right)$ is repusilve by assumption 2.8. 
Moreover we cannot have that $\omega(\gamma) = \{(0,0)\}$.
For the trajectory $\gamma$ to approach $(0,0)$, it must intersect the interior of the region where $f < 0$ and  $g > 0$ (as shown in Figure \ref{fig:cycle_lim_2}), and stay in this region as $t \to +\infty$, which is impossible as before and a similar argument shows that $\omega(\gamma)\neq \{(0,y_2)\}$.

Then the last case is if $\omega(\gamma)$ contains any critical points at all, $\omega(\gamma)$ can be a cycle graph. 
$\left(x^*,y^*\right)$ cannot belong to a cycle graph as before.  
If $(0,y_2)$ belongs to a cycle graph, then so must the orbit along $A$ with $\omega$-limit set $\{(0,y_2)\}$ and $\alpha$-limit set $\{(0,0)\}$, since that is the only orbit in the rectangular region with $\omega$-limit set $\{(0,y_2)\}$.
But in a cycle graph, each limit set of each orbit is a critical point of the cycle graph. 
We must then have that $\{(0,0)\}$ in the cycle graph, then the unbounded orbit on $B$ also belongs to the cycle graph since it is the only orbit in the positive quadrant whose $\omega$-limit set is $\{(0,0)\}$. 
This is impossible since $\omega(\gamma)$ is bounded because $\gamma$ is bounded. 
Thus $\omega(\gamma)$ contains no critical points.
By the fivefold way, we have that $\omega(\gamma)$ is a cycle.
\end{proof}

We conclude with the proof of Proposition \ref{thm:cycle_lim_2}, which provides a necessary condition for the existence of a limit cycle.

\begin{proof}
We consider system \eqref{eq:init} on the strictly positive quadrant:

\begin{equation*}
\left\{
    \begin{array}{ll}
        \dot{x} = x f(x,y), \\
        \dot{y} = y g(x,y).
    \end{array}
\right.
\end{equation*}

We assume that $\frac{\partial f}{\partial x}$ and $\frac{\partial g}{\partial y}$ do not change sign.

We consider the Dulac function
\[
B(x,y) = \frac{1}{xy},
\]
which is $C^1$ on the strictly positive quadrant.

Let $F(x,y) = (x f(x,y),\, y g(x,y))$ denote the vector field. 
Then
\[
B F = \left( \frac{f(x,y)}{y},\, \frac{g(x,y)}{x} \right).
\]
Its divergence is
\[
\nabla \cdot (B F)
= \frac{\partial}{\partial x} \left( \frac{f(x,y)}{y} \right)
+ \frac{\partial}{\partial y} \left( \frac{g(x,y)}{x} \right)
= \frac{1}{y} \frac{\partial f}{\partial x}
+ \frac{1}{x} \frac{\partial g}{\partial y}.
\]

Since $x>0$ and $y>0$, the sign of 
$\nabla \cdot (B F)$ is entirely determined by the signs of $\frac{\partial f}{\partial x}$ and $\frac{\partial g}{\partial y}$. 
If both $\frac{\partial f}{\partial x}$ and $\frac{\partial g}{\partial y}$ are negative everywhere (or both postive everywhere), then $\nabla \cdot (B F)$
does not change sign and is not identically zero.

By the Bendixson--Dulac criterion (see \cite{perko2013differential}), a $C^1$ planar vector field defined on on the strictly positive quadrant cannot admit a periodic orbit entirely contained in that domain if there exists a $C^1$ Dulac function whose weighted divergence does not change sign.

Therefore, the system admits no limit cycle in the strictly positive quadrant.
\end{proof}

\section*{Acknowledgments}

We sincerely thank Frédéric Faure for his valuable discussions and teachings.  
This work has benefited from the support of the MMB Chair (Mathematics for Life and Environmental Sciences) VEOLIA-Ecole Polytechnique-MNHN-F.X.
The European Union (ERC Consolidator) grant OUTOFTHEBLUE (project number 101088089) obtained by VL supported this work. 
Views and opinions expressed are however those of the authors only and do not necessarily reflect those of the European Union or the European Research Council. 
Neither the European Union nor the granting authority can be held responsible for them.










\appendix

\section{Index Theory and Equilibrium Classification}
\label{appendix:Index}

Index theory is an interesting tool in the qualitative study of dynamical systems. 
It provides a way to classify equilibrium points and understand the possible global behavior of trajectories.
The index of an equilibrium point is a topological invariant that remains unchanged under continuous deformations of the vector field.

\subsection{Definition of the Index}

Let $(x^*, y^*)$ be an equilibrium point of a planar dynamical system. 
The index of this equilibrium point is defined by analyzing the behavior of the vector field along a simple closed curve surrounding $(x^*, y^*)$. 
More formally, the index is given by the total winding number of the vector field as one traverses the closed curve counterclockwise.

Mathematically, a system given by:

\begin{equation}
\left\{
    \begin{array}{ll}
        \dot{x} &= F\left(x,y\right) \\
        \dot{y} &= G\left(x,y\right)
    \end{array}
\right.
\label{eq:index}
\end{equation}

defines a vector field for which the slope:

\begin{equation}
\frac{dy}{dx} = \frac{G(x, y)}{F(x, y)}
\end{equation}

forms an angle with the $x$-axis given by: 

\begin{equation}
\varphi(x, y) = \arctan \left( \frac{G(x, y)}{F(x, y)} \right).
\end{equation}

Given this angle and a simple closed curve $\gamma$ in $\mathbb{R}^2$, the index of $\gamma$ , denoted $\text{Ind}(\gamma)$, is defined as: 

\begin{equation}
\text{Ind}(\gamma) = \frac{1}{2\pi} \oint_{\gamma} d\varphi(x, y).
\end{equation}

\subsection{Indices of Different types of Equilibrium Points} 

The index of an equilibrium point depends on its stability and the local structure of trajectories.

\begin{itemize}
    \item Saddle Point (Index $-1$): A saddle point is an equilibrium where trajectories approach along one direction and diverge along another. 
    The vector field rotates by $\pi$ as one completes a loop around the equilibrium, leading to an index of $-1$.
    \item Attractive Point (Index $+1$): An attractor is an equilibrium where all nearby trajectories converge.
    The vector field makes a full $2\pi$ rotation around the equilibrium, giving an index of $+1$.
    \item Repulsive Point (Index $+1$): A repeller is an equilibrium where all nearby trajectories move away. As with attractors, the vector field completes a full rotation around the point, resulting in an index of $+1$.
\end{itemize}

\subsection{Implications of Index Theory}

A key consequence of index theory is the Poincaré-Hopf theorem, which states that the sum of the indices of all equilibrium points within a bounded region must equal the Euler characteristic of the domain. 
This imposes constraints on the possible configurations of equilibria and their stability.

For example, in a dynamical system where the entire phase plane is enclosed within a region, the total index must sum to $+1$. 
This means that a system with a single repulsive equilibrium point cannot have only saddle points without additional attractors or limit cycles to balance the total index.

\section{Trace-Determinant Analysis: Classical study of the point $(0,0)$}
\label{appendix:$(0,0)$}

The system \eqref{eq:init} always has two isoclines ($y = 0$ and $x = 0$), making $(0,0)$ an equilibrium point. 
To determine its nature, we examine the Jacobian matrix at $(0,0)$, assuming $f(0,0) \neq 0$ and $g(0,0) \neq 0$:

\[
J(0,0) = \begin{pmatrix}
f(0,0) & 0 \\
0 & g(0,0)
\end{pmatrix}.
\]

with trace $\text{Tr}(J) = f(0,0) + g(0,0)$ and determinant $\text{det}(J) = f(0,0) g(0,0)$. 
Three cases arise:
\begin{itemize}
    \item[(i)] Repulsive node ($f(0,0)>0$, $g(0,0) > 0$): Both species grow at low densities, leading to population increase. 
    This suggests facultative mutualism or independent persistence. In phase portrait, trajectories move away from extinction.
    \item[(ii)] Attractive node ($f(0,0)<0$, $g(0,0) < 0$): Both species decline at low densities, leading to extinction. 
    It can also indicate obligate mutualism, where each species is highly dependent on the other, and without sufficient partner density, both face extinction.
    This may indicate obligate mutualism or Allee effects. In phase portrait, trajectories converge to extinction.
    \item[(iii)] Saddle point ($f(0,0) g(0,0) < 0$): One species can persist alone, while the other requires interaction to survive. 
    This may indicate an asymmetric dependency (facultative-obligate mutualism where one species depends on the other for survival). 
    Some trajectories lead to extinction, while others promote coexistence or growth
\end{itemize}

These cases determine whether extinction is stable or unstable and how interactions shape population dynamics at low densities.

\section{Isocline behaviors and Point classifications}
\label{appendix:IsoTable}

We study system $\eqref{eq:init}$.
We assume that the functions $f$ and $g$ are $\cun$, so the sign of their partial derivatives remains strictly constant in a neighborhood of the equilibrium points.  
By the implicit function theorem, we can determine whether the isoclines are increasing or decreasing near these equilibrium points, in addition to identifying their nature using the trace-determinant diagram.
We classify the results based on the signs of the partial derivatives in Table \ref{tab:eq_iso} and specify when the system satisfies the assumptions of strict mutualism  as defined in Section \ref{subsec:Key_Assumptions_Def}.

\begin{sidewaystable}[h!]
\centering
    \begin{tabular}{|c|c|c|c|c|}
        \hline
         &  &  &  & \\
         & $\frac{\partial g}{\partial x}>0$ & $\frac{\partial g}{\partial x}>0$ & $\frac{\partial g}{\partial x}<0$ & $\frac{\partial g}{\partial x}<0$ \\
         &  &  &  & \\
         & $\frac{\partial g}{\partial y}>0$ & $\frac{\partial g}{\partial y}<0$ & $\frac{\partial g}{\partial y}>0$ & $\frac{\partial g}{\partial y}<0$ \\
         &  &  &  & \\
         \hline
         &  &  &  & \\
         $\frac{\partial f}{\partial x}>0$ & \textbf{Repulsive} or \textbf{Saddle Point} & \textbf{Saddle Point} & \textbf{Repulsive} & Indeterminate \\
         & $x$-isocline $\searrow$ & $x$-isocline $\searrow$ & $x$-isocline $\searrow$ & $x$-isocline $\searrow$ \\
         $\frac{\partial f}{\partial y}>0$ & $y$-isocline $\searrow$ & $y$-isocline $\nearrow$ & $y$-isocline $\nearrow$ & $y$-isocline $\searrow$ \\
         & \textit{Strict Mutualism} & \textit{Strict Mutualism} &  & \\
         \hline
         &  &  &  & \\
         $\frac{\partial f}{\partial x}>0$ & \textbf{Repulsive} & Indeterminate & \textbf{Repulsive} or \textbf{Saddle Point} & \textbf{Saddle Point} \\
         & $x$-isocline $\nearrow$ & $x$-isocline $\nearrow$ & $x$-isocline $\nearrow$ & $x$-isocline $\nearrow$ \\
         $\frac{\partial f}{\partial y}<0$ & $y$-isocline $\searrow$ & $y$-isocline $\nearrow$ & $y$-isocline $\nearrow$ & $y$-isocline $\searrow$ \\
         &  &  &  & \\
         \hline
         &  &  &  & \\
         $\frac{\partial f}{\partial x}<0$ & \textbf{Saddle Point} & \textbf{Attractive} or \textbf{Saddle Point} & Indeterminate & \textbf{Attractive} \\
         & $x$-isocline $\nearrow$ & $x$-isocline $\nearrow$ & $x$-isocline $\nearrow$ & $x$-isocline $\nearrow$ \\
         $\frac{\partial f}{\partial y}>0$ & $y$-isocline $\searrow$ & $y$-isocline $\nearrow$ & $y$-isocline $\nearrow$ & $y$-isocline $\searrow$ \\
         & \textit{Strict Mutualism} & \textit{Strict Mutualism} &  & \\
         \hline
         &  &  &  & \\
         $\frac{\partial f}{\partial x}<0$ & Indeterminate & \textbf{Attractive} & \textbf{Saddle Point} & \textbf{Attractive} or \textbf{Saddle Point} \\
         & $x$-isocline $\searrow$ & $x$-isocline $\searrow$ & $x$-isocline $\searrow$ & $x$-isocline $\searrow$ \\
         $\frac{\partial f}{\partial y}<0$ & $y$-isocline $\searrow$ & $y$-isocline $\nearrow$ & $y$-isocline $\nearrow$ & $y$-isocline $\searrow$ \\
         &  &  &  & \\
         \hline
    \end{tabular}
     \caption{Summary of isocline behaviors and point classifications.}
    \label{tab:eq_iso}
\end{sidewaystable}

\clearpage
\bibliographystyle{apalike}
\bibliography{references}

\end{document}